\newcommand{\PP}{\mathbb{P}}
\newcommand{\EE}{\mathbb{E}}
\newcommand{\RR}{\mathbb{R}}
\newcommand{\CC}{\mathbb{C}}
\newcommand{\NN}{\mathbb{N}}
\newcommand{\LL}{\mathcal{L}}
\newcommand{\F}{\mathcal{F}}
\newcommand{\tf}{\widetilde{f}}
\title{A Semi-Markovian Modeling of Limit Order Markets}
\author{{\bf Anatoliy Swishchuk and Nelson Vadori}\vspace{0.3cm}\\
Department of Mathematics and Statistics, University of Calgary,\\
2500 University Drive NW, Calgary, Alberta, Canada T2N 1N4\\
aswish@ucalgary.ca, nvadori@ucalgary.ca}
\date{March 11 2015}                                           
\newtheorem{thm}{Theorem}[part]
\newtheorem{lem}[thm]{Lemma}
\newtheorem{prop}[thm]{Proposition}
\newtheorem{rk}[thm]{Remark}
\begin{document}
\setlength{\parindent}{0mm}
\maketitle

{\bf Abstract.} R. Cont and A. de Larrard \cite{CL} introduced a tractable stochastic model for the dynamics of a limit order book, computing various quantities of interest such as the probability of a price increase or the diffusion limit of the price process. As suggested by empirical observations, we extend their framework to 1) arbitrary distributions for book events inter-arrival times (possibly non-exponential) and 2) both the nature of a new book event and its corresponding inter-arrival time depend on the nature of the previous book event. We do so by resorting to Markov renewal processes to model the dynamics of the bid and ask queues. We keep analytical tractability via explicit expressions for the Laplace transforms of various quantities of interest. We justify and illustrate our approach by calibrating our model to the five stocks Amazon, Apple, Google, Intel and Microsoft on June $21^{st}$ 2012. As in \cite{CL}, the bid-ask spread remains constant equal to one tick, only the bid and ask queues are modeled (they are independent from each other and get reinitialized after a price change), and all orders have the same size.\\
\vspace{2mm}

{\bf Key words.} limit order book, Markov renewal process, diffusion limit, duration analysis, Weibull, Gamma.

\vspace{2mm}

{\bf AMS subject classifications}. 60K15, 60K20, 90B22, 91B24, 91B70
 
\vspace{3mm}

\section{Introduction} 

Recently, interest in the modeling of limit order markets has increased. Some research has focused on optimal trading strategies in high-frequency environments: for example \cite{FP1} studies such optimal trading strategies in a context where the stock price follows a semi-Markov process, while market orders arrive in the limit order book via a point process correlated with the stock price itself. \cite{CJ} develops an optimal execution strategy for an investor seeking to execute a large order using both limit and market orders, under constraints on the volume of such orders. \cite{PSS} studies optimal execution strategies for the purchase of a large number of shares of a financial asset over a fixed interval of time.\\

On the other hand, another class of articles has aimed at modeling either the high-frequency dynamics of the stock price itself, or the various queues of outstanding limit orders appearing on the bid and the ask side of the limit order book, resulting in specific dynamics for the stock price. In \cite{FP}, a semi-Markov model for the stock price is introduced: the price increments are correlated and equal to arbitrary multiples of the tick size. The correlation between these price increments occurs via their sign only, and not their (absolute) size. In \cite{CST}, the whole limit order book is modeled (not only the ask and bid queues) via an integer-valued continuous-time Markov chain. Using a Laplace analysis, they compute various quantities of interest such as the probability that the mid-price increases, or the probability that an order is executed before the mid-price moves. A detailed section on parameter estimation is also presented. For a more thorough literature on limit order markets, we refer to the above cited articles and the references thereof.  \\

The starting point of the present manuscript is the article \cite{CL}, in which a stochastic model for the dynamics of the limit order book is presented. Only the bid and ask queues are modeled (they are independent from each other and get reinitialized after a price change), the bid-ask spread remains constant equal to one tick  and all orders have the same size. Their model is analytically tractable, and allows them to compute various quantities of interest such as the distribution of the duration between price changes, the distribution and autocorrelation of price changes, the probability of an upward
move in the price and the diffusion limit of the price process. Among the various assumptions made in this article, we seek to challenge two of them while preserving analytical tractability:
\begin{enumerate}[i)]
\item the inter-arrival times between book events (limit orders, market orders, order cancellations) are assumed to be independent and exponentially distributed.
\item the arrival of a new book event at the bid or the ask is independent from the previous events.
\end{enumerate}

Assumption i) is relatively common among the existing literature (\cite{G}, \cite{M}, \cite{CCM}, \cite{DW}, \cite{L}, \cite{SFGK}, \cite{CST}). Nevertheless, as it will be shown later, when calibrating the empirical distributions of the inter-arrival times to the Weibull and Gamma distributions (Amazon, Apple, Google, Intel and Microsoft on June 21st 2012), we find that the shape parameter is in all cases significantly different than 1 ($\sim 0.1$ to 0.3), which suggests that the exponential distribution is typically not rich enough to capture the behavior of these inter-arrival times.\\

Regarding Assumption ii), we split the book events into 2 different types: limit orders that increase the size of the corresponding bid or ask queue, and market orders/order cancellations that decrease the the size of the corresponding queue. Assimilating the former to the type "$+1$" and the latter to the type "$-1$", we find empirically that the probability to get an event of type "$\pm 1$" is not independent of the nature of the previous event. Indeed, we present below the estimated transition probabilities between book events at the ask and the bid for the stock Microsoft on June $21^{st}$ 2012. It is seen that the \textit{unconditional} probabilities $P(1)$ and $P(-1)$ to obtain respectively an event of type "$+1$" and "$-1$" are relatively close to $1/2$, as in \cite{CL}. Nevertheless, denoting $P(i,j)$ the \textit{conditional} probability to obtain an event of type $j$ given that the last event was of type $i$, we observe that $P(i,j)$ can significantly depend on the previous event $i$. For example, on the bid side, $P(1,1)=0.63$ whereas $P(-1,1) = 0.36$.\\

\begin{center}
\begin{tabular}{|c|c|c|}
\hline
 \textbf{Microsoft} & Bid&Ask\\\hline
$P(1,1)$ & 0.63 & 0.60 \\
$P(-1,1)$ & 0.36& 0.41\\\hline
$P(-1,-1)$ & 0.64& 0.59\\
$P(1,-1)$ &  0.37 & 0.40\\\hline
$P(1)$ & 0.49& 0.51\\
$P(-1)$ & 0.51 & 0.49\\\hline
\end{tabular}\\
$\left. \right.$\\
\scriptsize
\textit{Estimated probabilities for book event arrivals. June $21^{st}$ 2012.}\\
\end{center}
\vspace{5mm}

On another front, we will show that we can obtain diffusion limit results for the stock price without resorting to the strong symmetry assumptions of \cite{CL}. In particular, the assumption that price increments are i.i.d., which is contrary to empirical observations, as shown in \cite{FP} for example.\\

The paper is organized as follows: section 2 introduces our semi-Markovian modeling of the limit order book, section 3 presents the main probabilistic results obtained in the context of this semi-Markovian model (duration until the next price change, probability of price increase and characterization of the Markov renewal process driving the stock price process), section 4 deals with diffusion limit results for the stock price process, and section 5 presents some calibration results on real market data.\\

\section{A Semi-Markovian modeling of limit order markets}

Throughout this paper and to make the reading more convenient, we will use - when appropriate - the same notations as \cite{CL}, as it is the starting point of the present article. In this section we introduce our model, highlighting when necessary the mains differences with the model in \cite{CL}.\\

Let $s_t$, $s^a_t$, $s^b_t$ be respectively the mid, the ask and the bid price processes. Denoting $\delta$ the "tick size", these quantities are assumed to be linked by the following relations:
\begin{align*}
& s_t=\frac{1}{2}(s_t^a+s_t^b),
& s_t^a=s_t^b+\delta.
\end{align*}

We will also assume that $s_0^b$ is deterministic and positive. In this context, $s^a_0=s_0^b+\delta$ and $s_0=s_0^b+\frac{\delta}{2}$ are also deterministic and positive. As shown in \cite{CL}, the assumption that the bid-ask spread $s_t^a-s_t^b$ is constant and equal to one tick does not exactly match the empirical observations, but it is a reasonable assumption as \cite{CL} shows that - based on an analysis of the stocks Citigroup, General Electric, General Motors on June $26^{st}$ 2008 - more than 98\% of the observations have a bid-ask spread equal to 1 tick. This corresponds to a situation where the order book contains no empty levels (also called "gaps").\\

The price process $s_t$ is assumed to be piecewise constant: at random times $\{T_n\}_{n \geq 0}$ (we set $T_0:=0$), it jumps from its previous value $s_{T_n^-}$ to a new value $s_{T_n}=s_{T_n^-} \pm \delta$. By construction, the same holds for the ask and bid price processes $s^a_t$ and $s^b_t$. These random times $\{T_n\}$ correspond to the times at which either the bid or the ask queue get depleted, and therefore, the distribution of these times $\{T_n\}$ will be obtained as a consequence of the dynamics that we will choose to model the bid and ask queues. Let us denote $q_t^a$ and $q_t^b$ the non negative integer-valued processes  representing the respective sizes of the ask and bid queues at time $t$, namely the number of outstanding limit orders at each one of these queues. If the ask queue gets depleted before the bid queue at time $T_n$ - i.e. $q^a_{T_n}=0$ and $q^b_{T_n}>0$ - then the price goes up: $s_{T_n}=s_{T_n^-} + \delta$ and both queue values $(q^b_{T_n}, q_{T_n}^a)$ are immediately reinitialized to a new value drawn according to the distribution $f$, independently from all other random variables. In this context, if $n_b$, $n_a$ are positive integers, $f(n_b,n_a)$ represents the probability that, after a price increase, the new values of the bid and ask queues are respectively equal to $n_b$ and $n_a$. On the other hand, if the bid queue gets depleted before the ask queue at time $T_n$ - i.e. $q^a_{T_n}>0$ and $q^b_{T_n}=0$ - then the price goes down: $s_{T_n}=s_{T_n^-} - \delta$ and both queue values $(q^b_{T_n}, q_{T_n}^a)$ are immediately reinitialized to a new value drawn according to the distribution $ \tf$, independently from all other random variables. Following the previous discussion, one can remark that the processes $q^b_{t}$, $q_{t}^a$ will never effectively take the value 0, because whenever $q^b_{T_n}=0$ or $q^a_{T_n}=0$, we "replace" the pair $(q^b_{T_n}, q_{T_n}^a)$ by a random variable drawn from the distribution $f$ or $\tf$. The precise construction of the processes $(q^b_{t}, q_{t}^a)$ will be explained below.\\

Let $\tau_n:=T_n-T_{n-1}$ the "sojourn times" between two consecutive price changes, $N_t:= \sup\{n: T_n \leq t\}=\sup\{n: \tau_1+...+\tau_n \leq t\}$ the number of price changes up to time $t$, $X_n:=s_{T_n}-s_{T_{n-1}}$ the consecutive price increments (which can only take the values $\pm \delta$). With these notations we have:
\begin{align*}
& s_t=\sum_{k=1}^{N_t} X_k.\\
\end{align*}

Let us now present the chosen model for the dynamics of the bid and ask queues. As mentioned in introduction, we seek to extend the model \cite{CL} in the two following directions, as suggested by our calibration results:
\begin{enumerate}[i)]
\item inter-arrival times between book events (limit orders, market orders, order cancellations) are allowed to have an arbitrary distribution.
\item the arrival of a new book event at the bid or the ask and its corresponding inter-arrival time are allowed to depend on the nature of the previous event.
\end{enumerate}

In order to do so, we will use a Markov renewal structure for the joint process of book events and corresponding inter-arrival times occurring at the ask and bid sides. Formally, for the ask side, consider a family $\{R^{n,a}\}_{n \geq 0}$ of Markov renewal processes given by:
\begin{align*}
& R^{n,a}:=\{(V^{n,a}_k, T^{n,a}_k)\}_{k \geq 0}.
\end{align*}

For each $n$, $R^{n,a}$ will "drive" the dynamics of the ask queue on the interval $[T_n,T_{n+1})$ where the stock price remains constant. $\{V_k^{n,a}\}_{k \geq 0}$ and $\{T_k^{n,a}\}_{k \geq 0}$ represent respectively the consecutive book events and the consecutive inter-arrival times between these book events at the ask side on the interval $[T_n,T_{n+1})$. At time $T_{n+1}$ where one of the bid or ask queues gets depleted, the stock price changes and the model will be reinitialized with an independent copy $R^{n+1,a}$ of $R^{n,a}$: it will therefore be assumed that the processes $\{R^{n,a}\}_{n \geq 0}$ are independent copies of the same Markov renewal process of kernel $Q^a$, namely for each $n$:
\begin{align*}
& \PP[V_{k+1}^{n,a}=j, T_{k+1}^{n,a}  \leq t| T_{p}^{n,a}, V_{p}^{n,a}: p \leq k]=Q^a(V_{k}^{n,a},j,t), \hspace{5mm} j \in \{-1,1\}\\
& \PP[V_{0}^{n,a}=j]=v^a_0(j), \hspace{5mm} j \in \{-1,1\}\\
& \PP[T_{0}^{n,a}=0]=1.
\end{align*}

We recall that as mentioned earlier, we consider two types of book events $V_{k}^{n,a}$: events of type $+1$ which increase the ask queue by 1 (limit orders), and events of type $-1$ which decrease the ask queue by 1 (market orders and order cancellations). In particular, the latter assumptions constitute a generalization of \cite{CL} in the sense that for each $n$:
\begin{itemize}
 \item $V_{k+1}^{n,a}$ depends on the previous queue change $V_{k}^{n,a}$: $\{V_k^{n,a}\}_{k \geq 0}$ is a Markov chain.
 \item the inter-arrival times $\{T_k^{n,a}\}_{k \geq 0}$ between book events can have arbitrary distributions. Further, they are not strictly independent anymore but they are independent conditionally on the Markov chain $\{V_k^{n,a}\}_{k \geq 0}$.
\end{itemize}

We use the same notations to model the bid queue - but with indexes $^a$ replaced by $^b$ - and we assume that the processes involved at the bid and at the ask are independent.\\

In \cite{CL}, the kernel $Q^a$ is given by (the kernel $Q^b$ has a similar expression with indexes $^a$ replaced by $^b$):
\begin{align*}
& Q^a(i,1,t)=\frac{\lambda^a}{\lambda^a+\theta^a+\mu^a} (1-e^{-(\lambda^a+\theta^a+\mu^a)t}), \hspace{5mm} i \in \{-1,1\}\\
& Q^a(i,-1,t)=\frac{\theta^a+\mu^a}{\lambda^a+\theta^a+\mu^a} (1-e^{-(\lambda^a+\theta^a+\mu^a)t}) , \hspace{5mm} i \in \{-1,1\}.
\end{align*}

Given these chosen dynamics to model to ask and bid queues between two consecutive price changes, we now specify formally the "state process":
\begin{align*}
& \widetilde{L}_t:=(s_t^b,q_t^b, q_t^a)
\end{align*}

which will keep track of the state of the limit order book at time $t$ (stock price and sizes of the bid and ask queues). In the context of \cite{CL}, this process $\widetilde{L}_t$ was proved to be Markovian. Here, we will need to "add" to this process the process $(V_t^b, V_t^a)$ keeping track of the nature of the last book event at the bid and the ask to make it Markovian: in this sense we can view it as being semi-Markovian. The process:
\begin{align*}
& L_t:=(s_t^b,q_t^b, q_t^a, V_t^b, V_t^a)
\end{align*}
constructed below will be proved to be Markovian.\\

The process $L$ is piecewise constant and changes value whenever a book event occurs at the bid or at the ask. We will construct both the process $L$ and the sequence of times $\{T_n\}_{n \geq 0}$ recursively on $n \geq 0$. The recursive construction starts from $n=0$ where we have $T_0=0$, $s_0^b>0$ deterministic, and $(q_0^b, q_0^a, V_0^b, V_0^a)$ is a random variable with distribution $f_0 \times v^b_0 \times v^a_0$, where $f_0$ is a distribution on $\NN^* \times \NN^*$, and both $v^b_0$ and $v^a_0$ are distributions on the two-point space $\{-1,1\}$, that is $v^b_0(1)=\PP[V_0^b=1]$ is given and $v^b_0(-1)=1-v^b_0(1)$ (and similarly for the ask). We will need to introduce the following processes for the bid side (for the ask side, they are defined similarly): 
\begin{align*}
& \bar{T}_k^{n,b}:=\sum_{p=0}^k T_p^{n,b},
& N^{n,b}_t:= \sup\{k: T_n+\bar{T}_k^{n,b} \leq t\}.
\end{align*}

With these notations, the book events corresponding to the interval $[T_n,T_{n+1})$ occur at times $T_n+\bar{T}_k^{n,b}$ ($k \geq 0$) until one of the queues gets depleted, and $N^{n,b}_t$ counts the number of book events on the interval $[T_n,t]$, for $t \in [T_n,T_{n+1})$. \\

The joint construction of $L$ and of the sequence of times $\{T_n\}_{n \geq 0}$ is done recursively on $n \geq 0$. The following describes the step $n$ of the recursive construction:
\begin{itemize}
\item For each $T \in \{T_n+\bar{T}_k^{n,b}\}_{k \geq 1}$, the book event $v^b_{n,T}:=V_{N^{n,b}_{T}}^{n,b}$ occurs at time $T$ at the bid side. If $q_{T^-}^b+v^b_{n,T}>0$, there is no price change at time $T$ and we have:
\begin{align*}
& (s^b_T,q_T^b,q_T^a,V_T^b, V_T^a)=(s^b_{T^-},q_{T^-}^b+v^b_{n,T},q_{T^-}^a,v^b_{n,T},V_{T^-}^a).
\end{align*}

If on the other hand $q_{T^-}^b+v^b_{n,T}=0$, there is a price change at time $T$ and the model gets reinitialized:
\begin{align*}
& (s^b_T,q_T^b,q_T^a,V_T^b, V_T^a)=(s^b_{T^-}-\delta , \tilde{x}^b_{n},\tilde{x}^a_{n}, v^b_{0,n}, v^a_{0,n}),
\end{align*}
where $\{(\tilde{x}_k^b,\tilde{x}_k^a)\}_{k \geq 0}$ are i.i.d. random variables, independent from all other random variables, with joint distribution $\tf$ on $\NN^* \times \NN^*$, and $\{v^b_{0,k}, v^a_{0,k}\}_{k \geq 0}$ are i.i.d. random variables, independent from all other random variables, with joint distribution $v^b_0 \times v^a_0$ on the space $\{-1,1\} \times \{-1,1\}$. We then set $T_{n+1}=T$ and move from the step $n$ of the recursion to the step $n+1$.

\item For each $T \in \{T_n+\bar{T}_k^{n,a}\}_{k \geq 1}$, the book event $v^a_{n,T}:=V_{N^{n,a}_{T}}^{n,a}$ occurs at time $T$ at the ask side. If $q_{T^-}^a+v^a_{n,T}>0$, there is no price change at time $T$ and we have:
\begin{align*}
& (s^b_T,q_T^b,q_T^a,V_T^b, V_T^a)=(s^b_{T^-},q_{T^-}^b,q_{T^-}^a+v^a_{n,T},V_{T^-}^b,v^a_{n,T}).
\end{align*}

If on the other hand $q_{T^-}^a+v^a_{n,T}=0$, there is a price change at time $T$ and the model gets reinitialized:
\begin{align*}
& (s^b_T,q_T^b,q_T^a,V_T^b, V_T^a)=(s^b_{T^-}+\delta , x^b_{n},x^a_{n}, v^b_{0,n}, v^a_{0,n}),
\end{align*}
where $\{(x_k^b,x_k^a)\}_{k \geq 0}$ are i.i.d. random variables, independent from all other random variables, with joint distribution $f$ on $\NN^* \times \NN^*$, and $\{v^b_{0,k}, v^a_{0,k}\}_{k \geq 0}$ are the i.i.d. random variables defined above. We then set $T_{n+1}=T$ and move from the step $n$ of the recursion to the step $n+1$.
\end{itemize}

It results from the above construction and the Markov renewal structure of the processes $\{R^{n,a}\}_{n \geq 0}$, $\{R^{n,b}\}_{n \geq 0}$ that the process $L_t$ is Markovian.\\

Since the processes $\{R^{n,a}\}_{n \geq 0}$ are independent copies of the same Markov renewal process of kernel $Q^a$, we will drop the index $n$ when appropriate in order to make the notations lighter. Following this remark, we will introduce the following notations for the ask, for $i,j \in \{-1,1\}$ (for the bid, they are defined similarly):
\begin{align*}
& P^a(i,j):=\PP[V_{k+1}^a=j | V_{k}^a=i],\\
& F^a(i,t):=\PP[T_{k+1}^a \leq t |�V_{k}^a=i],\\
& H^a(i,j,t):=\PP[T_{k+1}^a \leq t |�V_{k}^a=i,V_{k+1}^a=j],\\
& h^a(i,j):=\int_0^\infty t H^a(i,j,dt),\\
& h_1^a:=h^a(1,1)+h^a(-1,-1), \hspace{5mm}h_2^a:=h^a(-1,1)+h^a(1,-1),\\
& m^a(s,i,j):=\int_0^\infty e^{-st} Q^a(i,j,dt), \hspace{3mm} s \in \CC,\\
& M^a(s,i):=m^a(s,i,-1)+m^a(s,i,1)=\int_0^\infty e^{-st} F^a(i,dt), \hspace{3mm} s \in \CC.
\end{align*}

Throughout this paper, we will use the following mild technical assumptions:\\

{\bf (A1)} $0<P^a(i,j)<1$, $0<P^b(i,j)<1$, \hspace{3mm} $i,j \in \{-1,1\}$.\\

{\bf (A2)} $F^a(i,0)<1$, $F^b(i,0)<1$, \hspace{3mm} $i \in \{-1,1\}$.\\

{\bf (A3)} $\int_0^\infty t^2 H^a(i,j,dt)<\infty$, $\int_0^\infty t^2 H^b(i,j,dt)<\infty$, \hspace{3mm} $i,j \in \{-1,1\}$.\\

Some brief comments on these assumptions: \textbf{(A1)} implies that each state $\pm 1$ is accessible from each state. \textbf{(A2)} means that each inter-arrival time between book events has a positive probability to be non zero, and \textbf{(A3)} constitutes a second moment integrability assumption on the cumulative distribution functions $H^a$ and $H^b$.

\vspace{3mm}

\section{Main Probabilistic Results}

Throughout this section and as mentioned earlier, since the processes $\{R^{n,a}\}_{n \geq 0}$ are independent copies of the same Markov renewal process of kernel $Q^a$, we will drop the index $n$ when appropriate in order to make the notations lighter on the random variables 
$T_k^{n,a}$, $\bar{T}_k^{n,a}$, $V_k^{n,a}$ (and similarly for the bid side).\\

\subsection{Duration until the next price change}

Given an initial configuration of the bid and ask queues $(q_0^b,q_0^a)=(n_b,n_a)$ ($n_b,n_a$ integers), we denote $\sigma_b$ the first time at which the bid queue is depleted: 
\begin{align*}
& \sigma_b=\bar{T}_{k^*}^{b},
&k^*:=\inf\{k: n_b+\sum_{m=1}^k V^b_m=0\}.
\end{align*}

Similarly we define $\sigma_a$ the first time at which the ask queue is depleted. The duration until the next price move is thus:
\begin{align*}
& \tau:= \sigma_a \wedge \sigma_b.\\
\end{align*}
In order to have a realistic model in which the queues always get depleted at some point, i.e. $\PP[\sigma_a<\infty]=\PP[\sigma_b<\infty]=1$, we impose the conditions:
\begin{align*}
& P^a(1,1) \leq P^a(-1,-1),
& P^b(1,1) \leq P^b(-1,-1).\\
\end{align*}

These conditions correspond to the condition $\lambda \leq \theta + \mu$ in \cite{CL}, and the proof of the proposition below shows that they are respectively equivalent to $\PP[\sigma_a<\infty]=1$ and $\PP[\sigma_b<\infty]=1$. Indeed, as $s \to 0$ ($s>0$), the Laplace transform $\LL^a(s):=\EE[e^{-s \sigma_a}]$ of $\sigma_a$ tends to $\PP[\sigma_a<\infty]$. The proposition below shows that if $P^a(1,1) > P^a(-1,-1)$, this quantity is strictly less than 1, and if $P^a(1,1) \leq P^a(-1,-1)$, this quantity is equal to 1. We have the following result which generalizes the Proposition 1 in \cite{CL} (see also Remark \ref{lob3} below):

\begin{prop}
\label{lob1}
The conditional law of $\sigma_a$ given $q_0^a=n \geq 1$ has a regularly varying tail with:
\begin{itemize}
\item tail exponent 1 if $P^a(1,1) < P^a(-1,-1)$.
\item tail exponent 1/2 if $P^a(1,1) = P^a(-1,-1)$.
\end{itemize}
More precisely, we get: if $P^a(1,1) = P^a(-1,-1)=p_a$:
\begin{align*}
& \PP[\sigma_a>t | q_0^a=n ] \stackrel{t \to \infty}{\sim} \frac{\alpha^a(n)}{\sqrt{t}}
\end{align*}

with:
\begin{align*}
& \alpha^a(n):= \frac{1}{p_a \sqrt{\pi}}(n+\frac{2p_a-1}{p_a-1} v^a_0(1)) \sqrt{p_a(1-p_a)}\sqrt{p_a h_1^a+(1-p_a)h_2^a}.\\
\end{align*}

If $P^a(1,1) < P^a(-1,-1)$, we get:
\begin{align*}
& \PP[\sigma_a>t | q_0^a=n ] \stackrel{t \to \infty}{\sim}  \frac{\beta^a(n)}{t} 
\end{align*}

with:
\begin{align*}
& \beta^a(n):=v_0^a(1)u^a_1+v_0^a(-1)u^a_2+(n-1)u^a_3,\\
& u^a_1:=h^a(1,-1)+\frac{P^a(1,1)}{1-P^a(1,1)}(u^a_3+h^a(1,1))\\
& u^a_2:=-h^a(1,1)+\frac{1-P^a(-1,-1)}{1-P^a(1,1)}(u^a_3+h^a(1,1))+P^a(-1,-1)h_1^a+(1-P^a(-1,-1)) h_2^a,\\
& u^a_3:=h^a(1,1)+\frac{1-P^a(1,1)}{P^a(-1,-1)-P^a(1,1)} \left( P^a(-1,-1) h_1^a+(1-P^a(-1,-1)) h_2^a\right).\\
\end{align*}

Similar expressions are obtained for $\PP[\sigma_b>t | q_0^b=n ]$, with indexes $^a$ replaced by $^b$. \\

\end{prop}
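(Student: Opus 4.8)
The plan is to analyze the Laplace transform $\LL^a(s) = \EE[e^{-s\sigma_a}]$ and extract the tail asymptotics of $\sigma_a$ via a Tauberian-type argument. The key observation is that $\sigma_a$ is the depletion time of a random walk on $\ZZ$ governed by the two-state Markov chain $\{V_k^a\}$ driving the increments, with the depletion governed by $k^* = \inf\{k : n + \sum_{m=1}^k V_m^a = 0\}$, and with inter-arrival durations $T_k^a$ that are conditionally independent given the chain. Since $\sigma_a = \sum_{m=1}^{k^*} T_m^a$ (a sum of a random, chain-determined number of durations), I would first condition on the increment chain to separate the combinatorial/random-walk structure from the timing structure.

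\vspace{2mm}

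First I would set up a renewal-type decomposition for $\LL^a(s)$ depending on the starting state $V_0^a = \pm 1$ and the starting queue size $n$. Writing recursions on $n$ by conditioning on the first step $V_1^a$ (with probabilities $P^a(i,\pm 1)$) and weighting each step by the Laplace factor $m^a(s,i,j) = \int_0^\infty e^{-st} Q^a(i,j,dt)$, I expect $\LL^a(s)$ to satisfy a linear system whose characteristic roots determine the $n$-dependence. Concretely, let $\phi_i(s)$ denote the Laplace transform of the first-passage time from queue level $1$ down to $0$ starting in state $i$; then a standard first-step analysis yields a pair of algebraic equations relating $\phi_1(s)$, $\phi_{-1}(s)$ and the one-step transforms $m^a(s,i,j)$, and the full transform starting at level $n$ factorizes through powers of a single "per-unit-decrement" transform $\psi(s)$. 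Solving this system gives $\LL^a(s)$ explicitly as a rational expression in the $m^a(s,i,j)$.

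\vspace{2mm}

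The heart of the argument is the small-$s$ expansion of $\LL^a(s)$, since the tail behavior of $\sigma_a$ as $t \to \infty$ corresponds to the singularity of $\LL^a(s)$ as $s \to 0^+$. Here the dichotomy between $P^a(1,1) < P^a(-1,-1)$ and $P^a(1,1) = P^a(-1,-1) = p_a$ enters: the mean drift of the increment chain toward depletion controls whether $\EE[\sigma_a] < \infty$ (finite mean, giving $1 - \LL^a(s) \sim c\,s$ and hence a tail $\sim \beta^a(n)/t$ with exponent $1$) or the critical/null-recurrent case (giving a $\sqrt{s}$ singularity, $1 - \LL^a(s) \sim c\sqrt{s}$, and hence a tail $\sim \alpha^a(n)/\sqrt{t}$ with exponent $1/2$). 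I would expand $m^a(s,i,j) = P^a(i,j) - s\,h^a(i,j) + o(s)$ using the moment definitions $h^a(i,j) = \int_0^\infty t\, H^a(i,j,dt)$, substitute into the solved expression for $\LL^a(s)$, and read off the leading singular term. The constants $\alpha^a(n)$ and $\beta^a(n)$ should emerge directly from the coefficients of $\sqrt{s}$ and $s$ respectively, with the $v_0^a(\pm 1)$ weights coming from averaging over the initial state distribution and the factors $p_a(1-p_a)$ and $p_a h_1^a + (1-p_a)h_2^a$ arising as the stationary-weighted variance of the increment and mean duration. Finally, I would invoke a Tauberian theorem for regularly varying tails (Karamata/Feller) to transfer the $s \to 0$ expansion of $1 - \LL^a(s)$ into the stated $t \to \infty$ tail asymptotics.

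\vspace{2mm}

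\textbf{The main obstacle} I anticipate is the critical case $p_a := P^a(1,1) = P^a(-1,-1)$, where the first-passage problem becomes null-recurrent and the transform develops a square-root branch point rather than a simple pole. Getting the $\sqrt{s}$ coefficient exactly right — in particular identifying the precise constant $\frac{1}{p_a\sqrt{\pi}}(n + \frac{2p_a-1}{p_a-1}v_0^a(1))\sqrt{p_a(1-p_a)}\sqrt{p_a h_1^a + (1-p_a)h_2^a}$ — will require a careful second-order expansion of the characteristic root $\psi(s)$ near $s=0$, since the linear term degenerates and one must retain the $\sqrt{s}$ contribution. Verifying that the degenerate/critical algebra produces exactly this combination, and handling the boundary correction from the initial state $V_0^a$ (which shifts the effective starting level by the $\frac{2p_a-1}{p_a-1}v_0^a(1)$ term), will be the most delicate part; the subcritical case $\beta^a(n)$, by contrast, is a more routine finite-mean computation even though its explicit constants $u_1^a, u_2^a, u_3^a$ are algebraically involved.
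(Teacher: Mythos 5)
Your proposal is correct and follows essentially the same route as the paper: a first-step recursion for the conditional Laplace transform $\LL(s,n,i)$, geometric dependence on $n$ through the root $<1$ of the characteristic quadratic in the $m^a(s,i,j)$ (your per-unit-decrement factorization is equivalent to the paper's direct solution of the linear recurrence, since the queue is skip-free downward), then a small-$s$ expansion distinguishing the critical $\sqrt{s}$ branch point from the subcritical linear term, concluded by Karamata's Tauberian theorem. One small correction to watch when extracting the constants: since $Q^a(i,j,dt)=P^a(i,j)H^a(i,j,dt)$ and $h^a(i,j)=\int_0^\infty t\,H^a(i,j,dt)$ is a conditional mean, the expansion is $m^a(s,i,j)=P^a(i,j)-s\,P^a(i,j)\,h^a(i,j)+o(s)$, i.e., the coefficient of $-s$ carries the factor $P^a(i,j)$ that your sketch omits.
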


\begin{rk}
\label{lob3}
We retrieve the results of \cite{CL}: if $P^a(1,1)=P^a(-1,-1)$, then within the context/notations of \cite{CL} we get $p_a=1/2$ and:
\begin{align*}
& h^a(i,j)=\int_0^\infty 2t \lambda e^{-2 \lambda t}dt=\frac{1}{2 \lambda},
\end{align*}
and so $\alpha^a(n)=\frac{n}{\sqrt{\pi \lambda}}$. For the case $P^a(1,1)<P^a(-1,-1)$ ($\lambda<\theta+\mu$ with their notations), we find:
\begin{align*}
& \beta^a(n)=\frac{n}{\theta+\mu-\lambda},
\end{align*}
which is different from the result of \cite{CL} that is $\beta^a(n)=\frac{n(\theta+\mu+\lambda)}{2\lambda(\theta+\mu-\lambda)}.$. We believe that they made a small mistake in their Taylor expansion on page 10: in the case $\lambda<\theta+\mu$, they should find:
\begin{align*}
& \LL(s,x) \stackrel{s \to 0}{\sim} 1-\frac{sx}{\theta+\mu-\lambda}.
\end{align*}
\end{rk}

\begin{proof}  
Let $s>0$ and denote $\LL(s,n,i):=\EE[e^{-s \sigma_a} |q_0^a=n,V_0^a=i]$. We have:
\begin{align*}
& \sigma_a=\sum_{m=1}^{k^*} T^a_m,
&k^*:=\inf\{k: n+\sum_{m=1}^k V^a_m=0\}.
\end{align*}

Therefore:
\begin{align*}
& \LL(s,n,i)=\EE[e^{-s T_{1}^a} \EE[e^{-s (\sigma_a-T_{1}^a)} | q_0^a=n,V_0^a=i,V_1^a,T_{1}^a ] |q_0^a=n,V_0^a=i]\\
&=\EE[e^{-s T_{1}^a} \underbrace{\EE[e^{-s (\sigma_a-T_{1}^a)} | q_{T_{1}^a}^a=n+V_1^a,V_0^a=i,V_1^a,T_{1}^a ]}_{\LL(s,n+V_1^a,V_1^a)} |q_0^a=n,V_0^a=i]\\
&= \EE[e^{-s T_{1}^a}  \LL(s,n+V_1^a,V_1^a) |q_0^a=n,V_0^a=i]\\
&= \int_0^\infty e^{-st} \LL(s,n+1,1) Q^a(i,1,dt)+\int_0^\infty e^{-st} \LL(s,n-1,-1) Q^a(i,-1,dt)\\
&= m^a(s,i,1) \LL(s,n+1,1)+m^a(s,i,-1) \LL(s,n-1,-1)
\end{align*}

Denote for sake of clarity $a_n:=\LL(s,n,1)$, $b_n:=\LL(s,n,-1)$. These sequences therefore solve the system of coupled recurrence equations:
\begin{align*}
& a_{n+1}=m^a(s,1,1) a_{n+2}+m^a(s,1,-1) b_{n}, \hspace{5mm} n \geq 0\\
& b_{n+1}=m^a(s,-1,1) a_{n+2}+m^a(s,-1,-1) b_{n}\\
& a_0=b_0=1.
\end{align*}

Simple algebra (computing $a_{n+1}-m^a(s,-1,-1)a_n$ on the on hand and $m^a(s,1,1)b_{n+1}-b_n$ on the other hand) gives us that both $a_n$ and $b_n$ solve the same following recurrence equation (but for different initial conditions):
\begin{align*}
& m^a(s,1,1) u_{n+2}-(1+\Delta^a(s))u_{n+1}+m^a(s,-1,-1) u_{n}, \hspace{5mm} n \geq 1\\
\end{align*}

with:
\begin{align*}
& \Delta^a(s):=m^a(s,1,1)m^a(s,-1,-1)-m^a(s,-1,1)m^a(s,1,-1).\\
\end{align*}

The parameter $\Delta^a(s)$ can be seen as a coupling coefficient and is equal to 0 when the random variable $(V^a_{k},T^a_k)$ doesn't depend on the previous state $V^a_{k-1}$, for example in the context of \cite{CL}.\\

If we denote $R(X)$ the characteristic polynomial associated to the previous recurrence equation $R(X):=m^a(s,1,1) X^2-(1+\Delta^a(s))X+m^a(s,-1,-1)$, then simple algebra gives us:
\begin{align*}
& R(1)=\underbrace{(M^a(s,1)-1)}_{< 0} \underbrace{(1-m^a(s,-1,-1))}_{>0}+\underbrace{m^a(s,1,-1)}_{>0} \underbrace{(M^a(s,-1)-1)}_{< 0} <0\\
\end{align*}

Note that $M^a(s,i)<1$ for $s>0$ because $F^a(i,0)<1$. Since $m^a(s,1,1)>0$, this implies that $R$ has only one root $<1$ (and an other root $>1$):
\begin{align*}
& \lambda^a(s):=\frac{1+\Delta^a(s)-\sqrt{(1+\Delta^a(s))^2-4m^a(s,1,1)m^a(s,-1,-1)}}{2m^a(s,1,1)}.
\end{align*}

Because we have $a_n,b_n \leq 1$ for $s>0$, then we must have for $n \geq 1$:
\begin{align*}
& a_n=a_1 \lambda^a(s)^{n-1}
& b_n=b_1 \lambda^a(s)^{n-1}\\
\end{align*}

The recurrence equations on $a_n$, $b_n$ give us:
\begin{align*}
& a_1=\frac{m^a(s,1,-1)}{1-\lambda^a(s)m^a(s,1,1) }
& b_1=\frac{m^a(s,-1,1)a_1+\Delta^a(s)}{m^a(s,1,1)}
\end{align*}

Finally, letting $\LL(s,n):=\EE[e^{-s \sigma_a} |q_0^a=n]$, we obtain:
\begin{align*}
& \LL(s,n)=\sum_{i} \LL(s,n,i) v^a_0(i)=a_n v^a_0(1)+b_n v^a_0(-1).
\end{align*}

The behavior of $\PP[\sigma_a>t | q_0^a=n ]$ as $t \to \infty$ is obtained by computing the behavior of $\LL(s,n)$ as $s \to 0$, together with Karamata's Tauberian theorem. By the second moment integrability assumption on $H^a(i,j,dt)$, we note that:
\begin{align*}
& m^a(s,i,j) = \int_0^\infty e^{-st} Q^a(i,j,dt)=P^a(i,j) \int_0^\infty e^{-st} H^a(i,j,dt) \\
&\stackrel{s \to 0}{\sim} P^a(i,j)-sP^a(i,j)\int_0^\infty t H^a(i,j,dt)=P^a(i,j)-sP^a(i,j) h^a(i,j).
\end{align*}

Now, assume $P^a(1,1) = P^a(-1,-1)=p_a$. A straightforward but tedious Taylor expansion of $\LL(s,n)$ as $s \to 0$ gives us:
\begin{align*}
& \LL(s,n) \stackrel{s \to 0}{\sim} 1 - \sqrt{\pi} \alpha^a(n) \sqrt{s}.
\end{align*}

The same way, if $P^a(1,1) < P^a(-1,-1)$, a straightforward Taylor expansion of $\LL(s,n)$ as $s \to 0$ gives us:
\begin{align*}
& \LL(s,n) \stackrel{s \to 0}{\sim} 1 - \beta^a(n)s.
\end{align*}
\end{proof}

We are interested in the asymptotic behavior of the law of $\tau$, which is, by independence of the bid/ask queues:

\begin{align*}
& \PP[\tau>t | (q_0^b,q_0^a)=(n_b,n_a) ]=\PP[\sigma_a>t | q_0^a=n_a]\PP[\sigma_b>t | q_0^b=n_b].\\
\end{align*}

We get the following immediate consequence of Proposition \ref{lob1}:

\begin{prop}
\label{lob2}
The conditional law of $\tau$ given $(q_0^b,q_0^a)=(n_b,n_a)$ has a regularly varying tail with:
\begin{itemize}
\item tail exponent 2 if $P^a(1,1) < P^a(-1,-1)$ and $P^b(1,1) < P^b(-1,-1).$ In particular, in this case, $\EE[\tau | (q_0^b,q_0^a)=(n_b,n_a) ]<\infty$.
\item tail exponent 1 if $P^a(1,1) = P^a(-1,-1)$ and $P^b(1,1) = P^b(-1,-1)$. In particular, in this case, $\EE[\tau | (q_0^b,q_0^a)=(n_b,n_a) ]= \infty$ whenever $n_b,n_a \geq 1$.
\item tail exponent 3/2 otherwise. In particular, in this case, $\EE[\tau | (q_0^b,q_0^a)=(n_b,n_a) ]<\infty$.
\end{itemize}

More precisely, we get: if $P^a(1,1) = P^a(-1,-1)$ and $P^b(1,1) = P^b(-1,-1)$:
\begin{align*}
& \PP[\tau>t | (q_0^b,q_0^a)=(n_b,n_a) ] \stackrel{t \to \infty}{\sim} \frac{\alpha^a(n_a) \alpha^b(n_b)}{t}
\end{align*}

if $P^a(1,1) < P^a(-1,-1)$ and $P^b(1,1) < P^b(-1,-1)$:
\begin{align*}
& \PP[\tau>t | (q_0^b,q_0^a)=(n_b,n_a) ] \stackrel{t \to \infty}{\sim} \frac{\beta^a(n_a) \beta^b(n_b)}{t^2}
\end{align*}

if $P^a(1,1) = P^a(-1,-1)$ and $P^b(1,1) < P^b(-1,-1)$:
\begin{align*}
& \PP[\tau>t | (q_0^b,q_0^a)=(n_b,n_a) ] \stackrel{t \to \infty}{\sim} \frac{\alpha^a(n_a) \beta^b(n_b) }{t^{3/2}}
\end{align*}

if $P^a(1,1) < P^a(-1,-1)$ and $P^b(1,1) = P^b(-1,-1)$:
\begin{align*}
& \PP[\tau>t | (q_0^b,q_0^a)=(n_b,n_a) ] \stackrel{t \to \infty}{\sim} \frac{\beta^a(n_a) \alpha^b(n_b) }{t^{3/2}}
\end{align*}

\end{prop}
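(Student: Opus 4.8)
The plan is to derive everything directly from the factorization
\begin{align*}
& \PP[\tau>t \mid (q_0^b,q_0^a)=(n_b,n_a)]=\PP[\sigma_a>t\mid q_0^a=n_a]\,\PP[\sigma_b>t\mid q_0^b=n_b],
\end{align*}
which holds by the assumed independence of the bid and ask queues, combined with the tail asymptotics already established in Proposition \ref{lob1}. This reduces the statement to an elementary manipulation of asymptotic equivalents, so no new probabilistic machinery is needed.

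First I would record the two possible behaviors for each factor. By Proposition \ref{lob1}, the quantity $\PP[\sigma_a>t\mid q_0^a=n_a]$ is regularly varying at infinity with tail exponent $1$ (resp. $1/2$) according to whether $P^a(1,1)<P^a(-1,-1)$ (resp. $P^a(1,1)=P^a(-1,-1)$), with leading constant $\beta^a(n_a)$ (resp. $\alpha^a(n_a)$); the same dichotomy holds for the bid factor with indexes $^a$ replaced by $^b$.

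Then I would simply multiply the asymptotic equivalences. Since $f\sim g$ and $h\sim k$ imply $fh\sim gk$, each of the four cases follows by multiplying the relevant equivalents from Proposition \ref{lob1}. The tail exponents add: $1+1=2$, $\tfrac12+\tfrac12=1$, and $\tfrac12+1=\tfrac32$, which produces exactly the three regimes (exponents $2$, $1$, $3/2$) in the statement, with product constants $\beta^a(n_a)\beta^b(n_b)$, $\alpha^a(n_a)\alpha^b(n_b)$, and the mixed products $\alpha^a(n_a)\beta^b(n_b)$, $\beta^a(n_a)\alpha^b(n_b)$. Regular variation of the product is inherited because a product of two regularly varying functions is regularly varying with index equal to the sum of the indices.

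Finally, for the finiteness assertions I would use $\EE[\tau\mid\cdot]=\int_0^\infty \PP[\tau>t\mid\cdot]\,dt$; since the integrand is bounded by $1$, integrability is governed solely by the tail. When the tail is $\sim c/t^\gamma$ with $\gamma>1$ the integral converges and the conditional expectation is finite, covering the exponent-$2$ and exponent-$3/2$ cases. In the exponent-$1$ case, if the constant $\alpha^a(n_a)\alpha^b(n_b)$ is strictly positive then $\int^\infty c/t\,dt=\infty$ and the expectation is infinite. The result is genuinely an immediate corollary, so there is no substantial obstacle; the only point requiring a small check is the strict positivity of the leading constant in the exponent-$1$ case, i.e.\ that $\alpha^a(n)>0$ for $n\geq1$ (and likewise for the bid), which is precisely why the hypothesis $n_a,n_b\geq1$ appears and which follows by inspecting the explicit formula for $\alpha^a(n)$ given in Proposition \ref{lob1}.
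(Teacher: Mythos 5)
Your proposal is correct and follows exactly the paper's route: the paper states the factorization $\PP[\tau>t\mid\cdot]=\PP[\sigma_a>t\mid\cdot]\,\PP[\sigma_b>t\mid\cdot]$ just before the proposition and then declares the result ``immediate using Proposition \ref{lob1},'' which is precisely the multiplication of asymptotic equivalents you carry out. Your added details on the integrability of the tail and the positivity of the leading constant only make explicit what the paper leaves implicit.
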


\begin{proof}  
Immediate using proposition \ref{lob1}.\\
\end{proof}

It will be needed to get the full law of $\tau$, which is, by independence of the bid/ask queues:

\begin{align*}
& \PP[\tau>t | (q_0^b,q_0^a)=(n_b,n_a) ]=\PP[\sigma_a>t | q_0^a=n_a]\PP[\sigma_b>t | q_0^b=n_b].\\
\end{align*}

We have computed explicitely the Laplace transforms of $\sigma_a$ and $\sigma_b$ (cf. the proof of Proposition \ref{lob1} above). There are  two possibilities: either it is possible to invert those Laplace transforms so that we can compute $\PP[\sigma_a>t | q_0^a=n_a]$ and $\PP[\sigma_b>t | q_0^b=n_b]$ in closed form and thus $\PP[\tau>t | (q_0^b,q_0^a)=(n_b,n_a) ]$ in closed form as in \cite{CL}. If not, we will have to resort to a numerical procedure to invert the characteristic functions of $\sigma_a$ and $\sigma_b$. Below we give the characteristic functions of $\sigma_a$ and $\sigma_b$: \\

 \begin{prop}
 \label{lob4}
Let $\phi^a(t,n):=\EE[e^{it \sigma_a} |q_0^a=n]$ ($t \in \RR$) the characteristic function of $\sigma_a$ conditionally on $q_0^a=n \geq 1$. We have:\\

if $m^a(-it,1,1) \neq 0$:
\begin{align*}
& \phi^a(t,n)=\left(c^a(-it)v^a_0(1)+d^a(-it)v^a_0(-1) \right)\lambda^a(-it)^{n-1}, \\
& c^a(z)=\frac{m^a(z,1,-1)}{1-\lambda^a(z)m^a(z,1,1) },\\
& d^a(z)=\frac{m^a(z,-1,1)c^a(z)+\Delta^a(z)}{m^a(z,1,1)},\\
& \Delta^a(z):=m^a(z,1,1)m^a(z,-1,-1)-m^a(z,-1,1)m^a(z,1,-1),\\
&\lambda^a(z):=\frac{1+\Delta^a(z)-\sqrt{(1+\Delta^a(z))^2-4m^a(z,1,1)m^a(z,-1,-1)}}{2m^a(z,1,1)}.
\end{align*}

and if $m^a(-it,1,1) = 0$:
\begin{align*}
& \phi^a(t,n)=(m^a(-it,1,-1) v^a_0(1)+\widetilde{\lambda^a}(-it) v^a_0(-1))\widetilde{\lambda^a}(-it)^{n-1},\\
&\widetilde{\lambda^a}(z):=\frac{m^a(z,-1,-1)}{1-m^a(z,1,-1)m^a(z,-1,1)}.
\end{align*}

The coefficient $\Delta^a(z)$ can be seen as a coupling coefficient and is equal to 0 when the random variable $(V^a_{k},T^a_k)$ doesn't depend on the previous state $V^a_{k-1}$, for example in the context of \cite{CL}.\\

The characteristic function $\phi^b(t,n):=\EE[e^{it \sigma_b} |q_0^b=n]$ has the same expression, with indexes $^a$ replaced by $^b$. 

\end{prop}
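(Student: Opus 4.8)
The plan is to observe that the characteristic function is nothing but the boundary value on the imaginary axis of the Laplace transform already computed in the proof of Proposition \ref{lob1}. Writing $z=-it$, we have $\phi^a(t,n)=\EE[e^{it\sigma_a}|q_0^a=n]=\EE[e^{-z\sigma_a}|q_0^a=n]$, and the entire first-step analysis of Proposition \ref{lob1} goes through verbatim with the real parameter $s$ replaced by the complex number $z$: the conditioning on $(V_1^a,T_1^a)$ uses only the Markov renewal kernel and is insensitive to whether the exponent is real or complex. Setting $a_n:=\EE[e^{-z\sigma_a}|q_0^a=n,V_0^a=1]$ and $b_n:=\EE[e^{-z\sigma_a}|q_0^a=n,V_0^a=-1]$, one recovers the same coupled recurrences, hence the same scalar recurrence with characteristic polynomial $R(X)=m^a(z,1,1)X^2-(1+\Delta^a(z))X+m^a(z,-1,-1)$, now with complex coefficients.

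First I would treat the generic case $m^a(-it,1,1)\neq 0$. Here $R$ genuinely remains quadratic, so the two-root structure of Proposition \ref{lob1} persists, and the expressions for $a_1,b_1$ translate directly into the stated $c^a(z),d^a(z)$. The only genuinely new point is the selection of the root: on $(0,\infty)$ we argued $|\lambda^a(s)|<1$ from the bound $a_n,b_n\le 1$, whereas on the imaginary axis the bound becomes $|a_n|,|b_n|\le 1$ and no longer distinguishes the two roots by inspection. I would resolve this by analytic continuation: for $\Re z>0$ one has $|e^{-z\sigma_a}|\le 1$, so $a_n(z),b_n(z)$ are analytic and uniformly bounded by $1$ on the open right half-plane, which forces the decaying mode and hence the identity $a_n=c^a(z)\lambda^a(z)^{n-1}$, with $\lambda^a(z)$ the analytic continuation (through the correct branch of the square root) of the root selected on $(0,\infty)$. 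Since the Laplace transform of a finite measure on $[0,\infty)$ is continuous up to the boundary $\Re z=0$, letting $z\to -it$ yields $\phi^a(t,n)=(c^a(-it)v^a_0(1)+d^a(-it)v^a_0(-1))\lambda^a(-it)^{n-1}$ at every $t$ for which $m^a(-it,1,1)\neq 0$.

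Next I would treat the degenerate case $m^a(-it,1,1)=0$, which is precisely where the expression for $\lambda^a(z)$, with its $2m^a(z,1,1)$ in the denominator, ceases to make sense. When $m^a(z,1,1)=0$ one has $\Delta^a(z)=-m^a(z,-1,1)m^a(z,1,-1)$ and $R$ drops to first order, so the coupled recurrences collapse: from $a_{n+1}=m^a(z,1,-1)b_n$, substituted into the equation for $b_{n+1}$, one gets $b_{n+1}=\widetilde{\lambda^a}(z)\,b_n$ with $\widetilde{\lambda^a}(z)=m^a(z,-1,-1)/(1-m^a(z,1,-1)m^a(z,-1,1))$. Using $a_0=b_0=1$ this gives $b_n=\widetilde{\lambda^a}(z)^n$ and $a_n=m^a(z,1,-1)\widetilde{\lambda^a}(z)^{n-1}$ for $n\ge 1$, and combining via $\phi^a(t,n)=a_n v^a_0(1)+b_n v^a_0(-1)$ produces exactly the stated degenerate formula.

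I expect the main obstacle to be the rigorous justification of the root selection and of the passage to the imaginary axis. One must verify that $\lambda^a(z)$ admits a single-valued analytic branch on $\Re z>0$ agreeing with the real-axis choice (i.e. that the discriminant does not vanish there, or handle it if it does), that this branch is the one of modulus $<1$ enforced by the boundedness of $a_n,b_n$ (which requires the second root to stay off the unit circle, an exceptional set of $t$ needing separate care), and that continuity of the Laplace transform up to $\Re z=0$ legitimately transfers the interior identity to the boundary, including a consistency check that the quadratic and degenerate formulas agree as $m^a(-it,1,1)\to 0$. Once these analytic points are settled, the algebra identifying $c^a,d^a,\widetilde{\lambda^a}$ is routine, inherited line by line from Proposition \ref{lob1}.
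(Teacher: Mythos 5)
Your proposal is correct and follows the same skeleton as the paper's proof: first-step conditioning yielding the coupled recurrences for $a_n=\EE[e^{-z\sigma_a}\,|\,q_0^a=n,V_0^a=1]$ and $b_n$, the quadratic characteristic polynomial $R$ with complex coefficients in the generic case, the collapse to a first-order recursion when $m^a(-it,1,1)=0$ (your explicit derivation of $b_{n+1}=\widetilde{\lambda^a}(z)\,b_n$ is exactly what the paper leaves implicit behind ``we can solve explicitly the above system''), and the final averaging against $v_0^a$. The one place where you genuinely diverge is the root selection. The paper works directly on the imaginary axis: it asserts, from $|M^a(-it,j)|\le 1$, that $|\lambda_+^a(-it)|>1$ for all $t\neq 0$ (relegated to ``tedious computations''), and then kills the growing mode using $|a_n|,|b_n|\le 1$ at that fixed $t$. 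You instead propagate the identity $a_n=c^a(z)\lambda^a(z)^{n-1}$ from $(0,\infty)$ into the open right half-plane by analyticity and boundedness, and recover the boundary values by continuity of the Laplace transform of a finite measure on $[0,\infty)$. Your route buys a formula at every boundary point without having to estimate $|\lambda_+^a(-it)|$ at all, at the cost of the branch and continuation issues you correctly flag (non-vanishing of the discriminant in the half-plane, single-valuedness of the square root, consistency of the two formulas as $m^a(-it,1,1)\to 0$); the paper's route avoids complex-analytic machinery but hides the real work in the unproved modulus inequality. Neither version fully discharges its respective technical debt, so your proposal is at least as complete as the published proof, and the algebra identifying $c^a$, $d^a$ and $\widetilde{\lambda^a}$ matches the paper line for line.
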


\begin{proof}
Similarly to the proof of Proposition \ref{lob1}, we obtain (using the same notations but denoting this time $a_n:=\LL(-it,n,1)$, $b_n:=\LL(-it,n,-1)$):
\begin{align*}
& a_{n+1}=m^a(-it,1,1) a_{n+2}+m^a(-it,1,-1) b_{n}, \hspace{5mm} n \geq 0\\
& b_{n+1}=m^a(-it,-1,1) a_{n+2}+m^a(-it,-1,-1) b_{n}\\
& a_0=b_0=1.
\end{align*}
If $m^a(-it,1,1) = 0$, we can solve explicitly the above system to get the desired result. If $m^a(-it,1,1) \neq 0$, we get as in the proof of Prop \ref{lob1} that both $a_n$ and $b_n$ solve the same following recurrence equation (but for different initial conditions):
\begin{align*}
& m^a(-it,1,1) u_{n+2}-(1+\Delta^a(-it))u_{n+1}+m^a(-it,-1,-1) u_{n}, \hspace{5mm} n \geq 1.\\
\end{align*}
Because $|m^a(-it,j,-1)+m^a(-it,j,1)|=|M^a(-it,j)|=\left| \int_0^\infty e^{its} F^a(j,ds) \right| \leq 1$, tedious computations give us that $|\lambda_+^a(-it)| >1$ whenever $t \neq 0$, where:
\begin{align*}
& \lambda_+^a(z):=\frac{1+\Delta^a(z)+\sqrt{(1+\Delta^a(z))^2-4m^a(z,1,1)m^a(z,-1,-1)}}{2m^a(z,1,1)}.
\end{align*}
Since both $|a_n|, |b_n| \leq 1$ for all $n$, it must be that:
\begin{align*}
& a_n=a_1 \lambda^a(-it)^{n-1}
& b_n=b_1 \lambda^a(-it)^{n-1},\\
\end{align*}
with $a_1,b_1$ being given by the recurrence equations on $a_n$, $b_n$:
\begin{align*}
& a_1=\frac{m^a(-it,1,-1)}{1-\lambda^a(-it)m^a(-it,1,1) }
& b_1=\frac{m^a(-it,-1,1)a_1+\Delta^a(-it)}{m^a(-it,1,1)}.
\end{align*}
Finally we conclude by observing that:
\begin{align*}
& \phi^a(t,n)=a_n v^a_0(1)+b_n v^a_0(-1).
\end{align*}
\end{proof}

\subsection{Probability of Price Increase}

Starting from an initial configuration of the bid and ask queues, $(q_0^b,q_0^a)=(n_b,n_a)$, the probability that the next price change is a price increase will be denoted $p_1^{up}(n_b,n_a)$. This quantity is equal to the probability that $\sigma_a$ is less than $\sigma_b$:
\begin{align*}
& p_1^{up}(n_b,n_a)=\PP[\sigma_a<\sigma_b | q_0^b=n_b, q_0^a=n_a].
\end{align*}

Since we know the characteristic functions of $\sigma_a$, $\sigma_b$ (cf. Proposition \ref{lob4}), we can compute their individual laws up to the use of a numerical procedure. Since $\sigma_a$ and $\sigma_b$ are independent, the law of $\sigma_b-\sigma_a$ can be computed using the individual laws of $\sigma_a$, $\sigma_b$, and therefore $p_1^{up}(n_b,n_a)$ can be computed up to the use of numerical procedures to 1) invert the characteristic function and 2) compute an indefinite integral. Indeed, denoting $f_{n_a,a}$ the p.d.f of $\sigma_a$ conditionally on $q_0^a=n_a$, and $F_{n_b,b}$ the c.d.f. of $\sigma_b$ conditionally on $q_0^b=n_b$, we have:
\begin{align*}
& p_1^{up}(n_b,n_a)=\PP[\sigma_a<\sigma_b | q_0^b=n_b, q_0^a=n_a]=\int_0^\infty f_{n_a,a}(t)(1-F_{n_b,b}(t))dt,\\
\end{align*}

where $F_{n_b,b}$ and $f_{n_a,a}$ are obtained by the following inversion formulas:
\begin{align*}
& f_{n_a,a}(t)=\frac{1}{2 \pi} \int_\RR e^{-itx} \phi^a(x,n_a)dx,\\
& F_{n_b,b}(t)=\frac{1}{2}-\frac{1}{\pi} \int_0^\infty \frac{1}{x} Im\{e^{-itx} \phi^b(x,n_b)\}dx.
\end{align*}

\vspace{3mm}

\subsection{The stock price seen as a functional of a Markov renewal process}

As mentioned earlier, we can write the stock price $s_t$ as:
\begin{align*}
& s_t=\sum_{k=1}^{N_t} X_k,
\end{align*}
where $\{X_n\}_{n \geq 0}$ are the consecutive price increments taking value $\pm \delta$, $\{\tau_n\}_{n \geq 0}$ are the consecutive durations between price changes and $\{T_n\}_{n \geq 0}$ the consecutive times at which the price changes.\\

In this context, the distribution of the random variable $\tau_{n+1}$ will depend on the initial configuration of the bid and ask queues at the beginning $T_n$ of the period $[T_n,T_{n+1})$, which itself depends on the nature of the previous price change $X_n$: if the previous price change is a price decrease, the initial configuration will be drawn from the distribution $\tf$, and if it is an increase, the initial configuration will be drawn from the distribution $f$. Because for each $n$ the random variable $(X_n, \tau_n)$ only depends on the previous increment $X_{n-1}$, it can be seen that the process $(X_n, \tau_n)_{n \geq 0}$ is a Markov renewal process (\cite{LO}, \cite{VS}), and the stock price can therefore be seen as a functional of this Markov renewal process. We obtain the following result.\\

\begin{prop}
\label{lob5}
 The process $(X_n, \tau_n)_{n \geq 0}$ is a Markov renewal process. The law of the process $\{\tau_n\}_{n \geq 0}$ is given by:
\begin{align*}
& F(\delta, t):=\PP[\tau_{n+1} \leq t | X_{n}=\delta]=\sum_{p=1}^\infty \sum_{n=1}^\infty f(n,p) \PP[\tau \leq t | (q_0^b,q_0^a)=(n,p)],\\
& F(-\delta, t):=\PP[\tau_{n+1} \leq t | X_{n}=-\delta]=\sum_{p=1}^\infty \sum_{n=1}^\infty \tf(n,p) \PP[\tau \leq t | (q_0^b,q_0^a)=(n,p)].
\end{align*}

The Markov chain $\{X_n\}_{n \geq 0}$ is characterized by the following transition probabilities:
 \begin{align*}
& p_{cont}:=\PP[X_{n+1}=\delta| X_{n}=\delta]=\sum_{i=1}^\infty \sum_{j=1}^\infty p_1^{up}(i,j) f(i,j).\\
& p'_{cont}:=\PP[X_{n+1}=-\delta| X_{n}=-\delta]=\sum_{i=1}^\infty \sum_{j=1}^\infty (1-p_1^{up}(i,j)) \tf(i,j).
\end{align*}
The generator of this Markov chain is thus (we assimilate the state 1 to the value $\delta$ and the state 2 to the value $-\delta$):
 \[P:= \left(
\begin{array}{ccc}
p_{cont} & 1-p_{cont}  \\
1-p'_{cont} & p'_{cont}
\end{array} \right) \]\\

Let $p_n^{up}(b,a):=\PP[X_{n}=\delta| q_0^b=b,q_0^a=a]$. We can compute this quantity explicitly:
\begin{align*}
& p_n^{up}(b,a)= \pi^*+(p_{cont}+p'_{cont}-1)^{n-1} \left( p_1^{up}(b,a)-\pi^*\right),\\
& \pi^*:=\pi^*(\delta):=\frac{p'_{cont}-1}{p_{cont}+p'_{cont}-2},
\end{align*}

where $\pi^*$ is the stationary distribution of the Markov chain $\{X_n\}$:
 \begin{align*}
& \pi^*=\lim_{n \to \infty} \PP[X_{n}=\delta| X_1].\\
\end{align*}
Further:
\begin{align*}
& \EE[X_{n}| q_0^b=b,q_0^a=a]=\delta(2p_n^{up}(b,a)-1),\\
\end{align*}

and the (conditional) covariance between two consecutive price moves:
\begin{align*}
& cov[X_{n+1}, X_n| q_0^b=b,q_0^a=a]=4\delta^2 p_n^{up}(b,a) (1-p_n^{up}(b,a))(p_{cont}+p'_{cont}-1).\\
\end{align*}
\end{prop}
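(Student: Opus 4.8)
The plan is to handle the assertions in sequence, most of them being bookkeeping consequences of the recursive construction of Section 2 and the rest a two-state linear-recurrence computation. I would begin with the Markov renewal property, which is the only structurally delicate point. The key fact is that at each price-change time $T_n$ the pair $(q^b_{T_n},q^a_{T_n})$ is redrawn from $f$ (when $X_n=\delta$) or from $\tf$ (when $X_n=-\delta$), independently of everything that occurred before $T_n$, and the dynamics on $[T_n,T_{n+1})$ are then governed by a fresh independent copy of the Markov renewal processes $R^{n,a},R^{n,b}$. Hence, conditionally on $X_n$, the pair $(X_{n+1},\tau_{n+1})$ is independent of $(X_k,\tau_k)_{k\le n}$ and its conditional law depends on the past only through the sign $X_n$; this is precisely the defining property of a Markov renewal process.

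Next, the law of $\{\tau_n\}$ and the transition probabilities follow by conditioning on the reinitialized configuration. Given $X_n=\delta$ the initial queues are distributed according to $f$, so the tower property gives
\[
F(\delta,t)=\sum_{p,n} f(n,p)\,\PP[\tau\le t\mid (q_0^b,q_0^a)=(n,p)],
\]
and symmetrically with $\tf$ for $F(-\delta,t)$, where $\PP[\tau\le t\mid\cdots]$ is the quantity whose tail was treated in Propositions \ref{lob1}--\ref{lob2}. The same averaging argument gives the jump probabilities: conditionally on $X_n=\delta$ the next move is up iff $\sigma_a<\sigma_b$, an event of probability $p_1^{up}(i,j)$ under initial configuration $(i,j)$, so integrating $p_1^{up}$ against $f$ yields $p_{cont}$, and integrating $1-p_1^{up}$ against $\tf$ yields $p'_{cont}$. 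The generator $P$ is then immediate.

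For the explicit formula, write $u_n:=p_n^{up}(b,a)=\PP[X_n=\delta\mid q_0^b=b,q_0^a=a]$. Because $\{X_n\}$ is Markov, conditioning on $X_n$ through $P$ gives the affine recurrence
\[
u_{n+1}=p_{cont}\,u_n+(1-p'_{cont})(1-u_n)=(1-p'_{cont})+(p_{cont}+p'_{cont}-1)\,u_n,
\]
whose unique fixed point is $\pi^*=(1-p'_{cont})/(2-p_{cont}-p'_{cont})=(p'_{cont}-1)/(p_{cont}+p'_{cont}-2)$. Solving from $u_1=p_1^{up}(b,a)$ produces the stated closed form; assumption (A1) forces $0<p_{cont},p'_{cont}<1$, so the contraction factor $p_{cont}+p'_{cont}-1$ lies in $(-1,1)$ and $u_n\to\pi^*$, while $\pi^*P=\pi^*$ confirms that $\pi^*$ is the stationary mass on state $\delta$. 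The conditional mean is then $\EE[X_n\mid\cdots]=\delta\,p_n^{up}-\delta(1-p_n^{up})=\delta(2p_n^{up}-1)$.

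For the covariance, set $c=p_{cont}$, $c'=p'_{cont}$, $p=p_n^{up}(b,a)$. Since $X_{n+1}X_n=\delta^2$ on the event that the two moves share a sign and $-\delta^2$ otherwise,
\[
\EE[X_{n+1}X_n\mid\cdots]=\delta^2\bigl(2[\,p\,c+(1-p)c'\,]-1\bigr),
\]
whereas $\EE[X_{n+1}\mid\cdots]\EE[X_n\mid\cdots]=\delta^2(2p_{n+1}^{up}-1)(2p-1)$ with $p_{n+1}^{up}=(1-c')+(c+c'-1)p$ from the recurrence above. Subtracting and simplifying, every surviving term carries the factor $(c+c'-1)$ and the rest collapses to $4p(1-p)$, giving $\mathrm{cov}=4\delta^2 p(1-p)(c+c'-1)$, which is the claimed identity. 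The only genuinely subtle step is the Markov renewal assertion, where one must argue that the reinitialization at $T_n$ erases all dependence on the past except through the sign $X_n$; once that is secured, everything else reduces to conditioning identities and a single two-state recurrence, with the covariance being the one place requiring sustained but entirely routine algebra.
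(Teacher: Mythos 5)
Your proposal is correct and follows essentially the same route as the paper: the paper's (very terse) proof likewise reduces everything to the two-state chain identity $\bigl(p_n^{up},\,1-p_n^{up}\bigr)=\bigl(p_1^{up},\,1-p_1^{up}\bigr)P^{n-1}$ and diagonalizes $P$ with eigenvalues $1$ and $p_{cont}+p'_{cont}-1$, which is exactly your affine recurrence solved via its fixed point $\pi^*$. The remaining items (the Markov renewal property from the reinitialization, the tower-property formulas for $F(\pm\delta,\cdot)$ and $p_{cont},p'_{cont}$, and the covariance algebra) are left by the paper as ``elementary calculations'' in the style of Cont--de Larrard, and your write-up fills them in correctly.
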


\begin{rk}
\label{lob6}
In particular, if $p_{cont}=p'_{cont}$, then $\pi^*=1/2$ and we retrieve the results of  \cite{CL}. We also note that the sign of the (conditional) covariance between two consecutive price moves does not depend on the initial configuration of the bid and ask queues and is given by the sign of $p_{cont}+p'_{cont}-1$. We also note that the quantities $p_{cont}$, $p'_{cont}$ can be computed up to the knowledge of the quantities $p_1^{up}(n_b,n_a)$ which computation was discussed in the previous section. The quantities $F( \pm \delta, t)$ can be computed up to the knowledge of the law of $\tau$, which is known up to the use of a numerical procedure to invert the characteristic functions of $\sigma_a$ and $\sigma_b$,  together with the results of Proposition \ref{lob4}.\\
\end{rk}

\begin{proof}

The results follow from elementary calculations in a similar way to what is done in \cite{CL}. Indeed, we have:
 \[
 \left(
\begin{array}{ccc}
p_n^{up}(b,a) & 1-p_n^{up}(b,a) 
\end{array} \right)
 = 
  \left(
\begin{array}{ccc}
p_1^{up}(b,a) & 1-p_1^{up}(b,a) 
\end{array} \right)
 \left(
\begin{array}{ccc}
p_{cont} & 1-p_{cont}  \\
1-p'_{cont} & p'_{cont}
\end{array} \right)^{n-1}
 \]\\

We also have:
 \[
\left(
\begin{array}{ccc}
p_{cont} & 1-p_{cont}  \\
1-p'_{cont} & p'_{cont}
\end{array} \right)
=S 
\left(
\begin{array}{ccc}
1 & 0  \\
0 & p_{cont}+p'_{cont}-1
\end{array} \right) S^{-1}
 \]\\

with:
\[
S=\left(
\begin{array}{ccc}
1 & -\frac{1-p_{cont}}{1-p'_{cont}}  \\
1 & 1
\end{array} \right)
 \]\\
\end{proof}

\section{Diffusion Limit of the Price Process}

In \cite{CL} it is assumed that $f(i,j)=\tf(i,j)=f(j,i)$ in order to make the price increments $X_n$ independent and identically distributed. In fact, this assumption can be entirely relaxed. Indeed, as we mentioned above, $(X_n, \tau_n)_{n \geq 0}$ is in fact a Markov renewal process and therefore we can use the related theory to compute the diffusion limit of the price process. The results of this section generalize the results of Section 4 in \cite{CL}.\\

\subsection{Balanced Order Flow case: $P^a(1,1) = P^a(-1,-1)$ and $P^b(1,1) = P^b(-1,-1)$}

Throughout this section we make the assumption:\\

{\bf (A4)} Using the notations of Proposition \ref{lob1}, the following holds:
\begin{align*}
& \sum_{n=1}^\infty \sum_{p=1}^\infty \alpha^b(n)\alpha^a(p)  f(n,p)<\infty,
& \sum_{n=1}^\infty \sum_{p=1}^\infty \alpha^b(n) \alpha^a(p)  \tf(n,p)<\infty.\\
\end{align*}

Using Proposition \ref{lob2}, we obtain the following result generalizing lemma 1 in \cite{CL}:

\begin{lem}
Under assumption {\bf (A4)}, the following weak convergence holds as $n \to \infty$:
\label{lob7}
\begin{align*}
& \frac{1}{n \log(n)} \sum_{k=1}^n \tau_k \Rightarrow \tau^*:= \sum_{n=1}^\infty \sum_{p=1}^\infty \alpha^b(n) \alpha^a(p)  f^*(n,p),\\
& \mbox{where }f^*(n,p):=\pi^*f(n,p)+(1-\pi^*)\tf(n,p).\\
\end{align*}
\end{lem}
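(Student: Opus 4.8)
The plan is to prove convergence to the deterministic limit $\tau^*$ by means of Laplace transforms, which converts the weak-convergence statement into a pointwise limit and sidesteps the delicate truncation arguments usually needed for sums of infinite-mean variables with tail index exactly $1$. Since the limit $\tau^*$ is a constant, weak convergence is the same as convergence in probability, and by the continuity theorem for Laplace transforms it suffices to show that for every fixed $s>0$,
\[
\EE\Big[\exp\Big(-\frac{s}{n\log n}\sum_{k=1}^n \tau_k\Big)\Big] \longrightarrow e^{-s\tau^*}\qquad (n\to\infty).
\]

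First I would identify the conditional tail of a single duration. Combining Proposition \ref{lob2} in the balanced case (tail exponent $1$) with the mixing formula for $F(\pm\delta,\cdot)$ from Proposition \ref{lob5}, and using assumption \textbf{(A4)} to dominate the infinite sum and pass the limit $t\to\infty$ inside it, I would obtain
\[
\PP[\tau_k>t\mid X_{k-1}=\delta]\sim \frac{A_\delta}{t},\qquad \PP[\tau_k>t\mid X_{k-1}=-\delta]\sim \frac{A_{-\delta}}{t},
\]
with $A_\delta:=\sum_{n,p}\alpha^b(n)\alpha^a(p)f(n,p)$ and $A_{-\delta}:=\sum_{n,p}\alpha^b(n)\alpha^a(p)\tf(n,p)$, both finite. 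The elementary Tauberian step, for a nonnegative $Y$ with $\PP[Y>t]\sim A/t$,
\[
1-\EE[e^{-\lambda Y}]=\int_0^\infty \lambda e^{-\lambda t}\,\PP[Y>t]\,dt \sim A\lambda\log(1/\lambda)\qquad(\lambda\downarrow 0),
\]
then gives, with $\lambda_n:=s/(n\log n)$ and $\log(1/\lambda_n)\sim \log n$,
\[
1-\EE[e^{-\lambda_n\tau_k}\mid X_{k-1}]=\frac{s\,A_{X_{k-1}}}{n}\,(1+o(1)),
\]
the $o(1)$ being uniform in $k$ since only the two conditional laws $F(\pm\delta,\cdot)$ occur.

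Next I would condition on the driving chain $\{X_k\}$. Given the chain, the $\tau_k$ are independent, so the conditional Laplace transform factorizes; taking logarithms and using $\log(1-x)=-x+O(x^2)$ with each increment of order $1/n$, the $O(x^2)$ errors sum to $O(1/n)$ and
\[
\log \prod_{k=1}^n \EE[e^{-\lambda_n\tau_k}\mid X_{k-1}] = -\frac{s}{n}\sum_{k=1}^n A_{X_{k-1}}+o(1).
\]
Because $\{X_k\}$ is a finite, irreducible Markov chain with stationary law $(\pi^*,1-\pi^*)$ (by \textbf{(A1)} and $0<p_{cont},p'_{cont}<1$), the ergodic theorem for additive functionals yields, for any initial law,
\[
\frac{1}{n}\sum_{k=1}^n A_{X_{k-1}}\longrightarrow \pi^*A_\delta+(1-\pi^*)A_{-\delta}=\tau^*\quad\text{a.s.}
\]
Hence the conditional Laplace transform tends almost surely to the constant $e^{-s\tau^*}$, and bounded convergence removes the conditioning, which establishes the displayed limit and therefore the claim.

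The main obstacle I expect lies in the uniform control justifying the two interchanges of limit and summation. The first is passing the $t\to\infty$ limit inside the infinite sum defining $A_{\pm\delta}$: this requires dominating $t\,\PP[\tau>t\mid(q_0^b,q_0^a)=(n,p)]$ by a constant multiple of the \textbf{(A4)}-summable sequence $\alpha^b(n)\alpha^a(p)$ uniformly in $t$, which one obtains from the factorization $\PP[\tau>t\mid\cdot]=\PP[\sigma_a>t\mid\cdot]\,\PP[\sigma_b>t\mid\cdot]$ together with uniform bounds on the individual square-root tails of Proposition \ref{lob1}. The second is ensuring the $o(1)$ in the Laplace asymptotic is uniform as $\lambda_n\to0$ across the $n$ summands; this is harmless here since only finitely many conditional laws are involved. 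Everything else is routine.
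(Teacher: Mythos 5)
Your route (Laplace transforms of the normalized sum, plus the ergodic theorem) is genuinely different from the paper's proof, which instead splits $\sum_{k=1}^n\tau_k$ into the two sub-sums indexed by the value of $X_{k-1}$, treats each sub-sum as an i.i.d. sum with law $F(\pm\delta,\cdot)$ and tail index $1$, applies the limit theorem of Lemma 1 in \cite{CL} to each piece, and recombines using $N_i(n)/n\to\pi^*(i)$ a.s. However, your argument has one genuine gap at the factorization step. Conditionally on the full chain $\{X_k\}$ the sojourn times are indeed independent, but the conditional law of $\tau_k$ given the chain is $\PP[\tau_k\in\cdot\mid X_{k-1},X_k]$, not $\PP[\tau_k\in\cdot\mid X_{k-1}]$: in this model $\tau_k=\sigma_a\wedge\sigma_b$ while $X_k=\delta$ exactly when $\sigma_a<\sigma_b$, so $\tau_k$ and $X_k$ are strongly dependent given $X_{k-1}$ and the two conditional laws differ. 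Consequently $\prod_{k}\EE[e^{-\lambda_n\tau_k}\mid X_{k-1}]$ is not the conditional Laplace transform of $\sum_k\tau_k$ given the chain, and the final "bounded convergence removes the conditioning" step is applied to the wrong object. Repairing this along your lines requires either (i) tail asymptotics for the joint quantities $\PP[\tau_k>t,\;X_k=j\mid X_{k-1}=i]$ --- which Proposition \ref{lob2} does not supply and which need more than the marginal tails of $\sigma_a$ and $\sigma_b$ --- followed by the ergodic theorem for the pair chain $(X_{k-1},X_k)$; or (ii) an unconditional computation of $\EE[e^{-\lambda_n\sum_k\tau_k}]$ as a product of the $2\times 2$ matrices with entries $\int_0^\infty e^{-\lambda_n t}\,\PP[\tau_k\le dt,\,X_k=j\mid X_{k-1}=i]$ and a Perron--Frobenius expansion of the leading eigenvalue, which again needs those joint transforms.

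The other soft spot you flag --- passing the limit $t\to\infty$ through the sum over $(n,p)$ to get $\PP[\tau_k>t\mid X_{k-1}=i]\sim A_i/t$ --- is a real issue, but it is common to both approaches: the paper needs exactly the same interchange to assert that $F(i,\cdot)$ has tail equivalent to $A_i/t$ before invoking Lemma 1 of \cite{CL}. Be aware that the uniform domination $\sup_{t\ge 1}\sqrt{t}\,\PP[\sigma_a>t\mid q_0^a=p]\le C\,\alpha^a(p)$ does not follow from the pointwise asymptotics of Proposition \ref{lob1} alone (a crude union bound over the $p$ descent epochs of the queue yields a bound of order $p^{3/2}/\sqrt{t}$, not $p/\sqrt{t}$), so this step deserves an actual argument, e.g.\ via the explicit formula $\LL(s,p)=(a_1v_0^a(1)+b_1v_0^a(-1))\lambda^a(s)^{p-1}$ and a quantitative Tauberian bound uniform in $p$. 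Your Karamata step $1-\EE[e^{-\lambda Y}]\sim A\lambda\log(1/\lambda)$ and the use of the continuity theorem for convergence to a constant are both correct.
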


\begin{proof}
We have:
\begin{align*}
& \frac{1}{n \log(n)}\sum_{k=1}^n \tau_k=\sum_{i \in \{-\delta,\delta\}} \frac{N_i(n)}{n} \frac{\log(N_i(n))}{\log(n)} \frac{1}{N_i(n) \log(N_i(n))} \sum_{k=1}^{N_i(n)} \tau_{p(k,i)},
\end{align*}\\

where for $i \in \{-\delta,\delta\}$, $N_{i}(n)$ represents the number of times that $X_{k-1}=i$ for $1 \leq k \leq n$; and $\{p(k,i):k \geq 1\}$ the successive indexes for which $X_{k-1}=i$. By the standard theory of Markov Chains, we have for $i \in \{-\delta,\delta\}$:
\begin{align*}
& \frac{N_i(n)}{n} \stackrel{a.e.}{\to} \pi^*(i),
\end{align*}\\
and therefore we have $\frac{\log(N_i(n))}{\log(n)} \stackrel{a.e.}{\to} 1$. We recall that $\pi^*(\delta):=\pi^*$, and $\pi^*(-\delta)=1-\pi^*$. For fixed $i \in \{-\delta,\delta\}$, the random variables $\{\tau_{p(k,i)}: k \geq 1\}$ are i.i.d. with distribution $F(i, \cdot)$, and with tail index equal to 1 (by Proposition \ref{lob2}). Using \cite{CL} (Lemma 1) together with Proposition \ref{lob2}, we get that:
\begin{align*}
& \frac{1}{n \log(n)} \sum_{k=1}^{n} \tau_{p(k,\delta)} \Rightarrow \sum_{n=1}^\infty \sum_{p=1}^\infty \alpha^b(n) \alpha^a(p) f(n,p),\\
&\frac{1}{n \log(n)} \sum_{k=1}^{n} \tau_{p(k,-\delta)} \Rightarrow \sum_{n=1}^\infty \sum_{p=1}^\infty \alpha^b(n) \alpha^a(p) \tf(n,p).
\end{align*}\\

The latter convergence holds in probability and we finally have:
\begin{align*}
& \frac{1}{n \log(n)} \sum_{k=1}^n \tau_k \stackrel{P}{\to} \pi^* \sum_{n=1}^\infty \sum_{p=1}^\infty \alpha^b(n) \alpha^a(p) f(n,p)+(1-\pi^*)\sum_{n=1}^\infty \sum_{p=1}^\infty \alpha^b(n) \alpha^a(p) \tf(n,p).
\end{align*}
\end{proof}
\vspace{3mm}

Let $s^*:=\delta(2 \pi^*-1)$. Using the previous lemma \ref{lob7}, we obtain the following diffusion limit for the renormalized price process $s_{tn \log(n)}$:\\

\begin{prop}
\label{lob8}
Under assumption {\bf (A4)}, the renormalized price process $s_{tn \log(n)}$ satisfies the following weak convergence in the Skorokhod topology (\cite{S}):
\begin{align*}
& \left(\frac{s_{tn \log(n)}}{n}, t \geq 0 \right) \stackrel{n \to \infty}{\Rightarrow}\left(\frac{s^*t}{\tau^*}, t \geq 0 \right),\\
& \left(\frac{s_{tn \log(n)}-N_{tn \log(n)}s^*}{\sqrt{n}}, t \geq 0 \right) \stackrel{n \to \infty}{\Rightarrow} \frac{\sigma}{\sqrt{\tau^*}} W,\\
\end{align*}
where $W$ is a standard Brownian motion and $\sigma$ is given by:
 \begin{align*}
&  \sigma^2=4 \delta^2 \left( \frac{1-p'_{cont}+\pi^*(p'_{cont}-p_{cont})}{(p_{cont}+p'_{cont}-2)^2}-\pi^*(1-\pi^*)\right).\\
\end{align*}
\end{prop}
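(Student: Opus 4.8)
The plan is to write the price at the (nonstandard) time scale $tn\log(n)$ as a random sum $s_{tn\log(n)}=\sum_{k=1}^{N_{tn\log(n)}} X_k$, to analyze separately the partial-sum process of the price increments and the renewal counting process $N_t=\sup\{m:T_m\le t\}$, and then to glue them together by a random time-change argument. First I would establish a law of large numbers and a functional central limit theorem for the additive functional $S_m:=\sum_{k=1}^m X_k$ of the two-state chain $\{X_n\}$, which is ergodic (its transition matrix $P$ from Proposition \ref{lob5} is irreducible and aperiodic since both price outcomes occur with probability in $(0,1)$) with stationary law $\pi^*$. Under $\pi^*$ one has $\EE_{\pi^*}[X]=\delta(2\pi^*-1)=s^*$ and $\mathrm{Var}_{\pi^*}(X)=4\delta^2\pi^*(1-\pi^*)$. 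Since the non-unit eigenvalue of $P$ is $\rho:=p_{cont}+p'_{cont}-1$, one checks $P^k-\mathbf{1}\pi^*=\rho^k(I-\mathbf{1}\pi^*)$, whence the stationary autocovariances decay geometrically, $\mathrm{Cov}_{\pi^*}(X_0,X_k)=\rho^k\,\mathrm{Var}_{\pi^*}(X)$. The classical asymptotic-variance formula then gives $\sigma^2=\mathrm{Var}_{\pi^*}(X)\bigl(1+2\sum_{k\ge 1}\rho^k\bigr)=\mathrm{Var}_{\pi^*}(X)\,\frac{1+\rho}{1-\rho}$, which, after substituting $\pi^*=(1-p'_{cont})/(2-p_{cont}-p'_{cont})$ and simplifying, reduces exactly to the announced expression for $\sigma^2$. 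This step delivers both $S_m/m\to s^*$ and the functional CLT $\frac{S_{\lfloor un\rfloor}-\lfloor un\rfloor s^*}{\sqrt n}\Rightarrow \sigma W_u$ in the Skorokhod topology.

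Next I would convert Lemma \ref{lob7}, which states $\frac{1}{m\log m}\sum_{k=1}^m\tau_k\stackrel{P}{\to}\tau^*$, into asymptotics for the counting process $N_t$. Inverting the renewal clock $T_m\sim\tau^*\,m\log m$, one expects $N_{tn\log n}/n\to t/\tau^*$ in probability: writing $N:=N_{tn\log n}$ we have $T_N\le tn\log n<T_{N+1}$, and substituting $T_N\sim\tau^* N\log N$ together with $\log N\sim\log n$ (valid because $N$ grows linearly in $n$) forces $N/n\to t/\tau^*$. I would then upgrade this to the functional statement $(N_{tn\log n}/n)_{t\ge 0}\stackrel{P}{\to}(t/\tau^*)_{t\ge 0}$, locally uniformly, exploiting the monotonicity of $t\mapsto N_t$ and the continuity of the deterministic limit $t\mapsto t/\tau^*$.

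Finally I would combine the two ingredients. For the fluid limit, write $\frac{s_{tn\log n}}{n}=\frac{S_N}{N}\cdot\frac{N}{n}$ and pass to the limit using the LLN of the first step and the time-change asymptotics of the second step to obtain $\frac{s^*t}{\tau^*}$. For the fluctuation limit, note that $s_{tn\log n}-N s^*=\sum_{k=1}^{N}(X_k-s^*)$, so that $\frac{s_{tn\log n}-N_{tn\log n}s^*}{\sqrt n}$ is precisely the functional-CLT process of the first step evaluated at the random time $u=N_{tn\log n}/n$. Because this time change converges in probability to the deterministic continuous map $t\mapsto t/\tau^*$, it converges jointly with the CLT process, and the random time-change theorem yields the limit $\sigma W_{t/\tau^*}$, which by Brownian scaling equals in law the process $\frac{\sigma}{\sqrt{\tau^*}}W_t$.

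The main obstacle is the inversion step: because the durations $\tau$ have regularly varying tail of index $1$ (infinite mean, by Proposition \ref{lob2}), the renewal clock is nonstandard and the usual renewal/elementary renewal theorems do not apply, so the passage from Lemma \ref{lob7} to $N_{tn\log n}/n\to t/\tau^*$ — and especially its locally uniform functional version, which is what feeds the time-change theorem and hence the Skorokhod convergence — must be justified by hand from the monotone inversion. A secondary point requiring care is that $\tau_k$ and $X_k$ are dependent, so the CLT process and the time change are a priori not independent; this is harmless only because the time-change limit is \emph{deterministic}, a fact I would invoke explicitly to secure the required joint convergence.
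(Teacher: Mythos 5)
Your proposal is correct and follows the same three-step architecture as the paper: a functional CLT for the partial sums of the two-state chain $\{X_n\}$, the inversion of the anomalous renewal clock $T_m\sim \tau^* m\log m$ to get $N_{tn\log n}/n\stackrel{P}{\to}t/\tau^*$, and a random time change. The one genuine difference is in the first step: the paper derives the FCLT and the variance by a martingale method, decomposing $R_n=\sum_{k\le n}(X_k-s^*)$ as a martingale plus a bounded corrector via the Poisson equation $[P-I]g=b$, and reads $\sigma^2$ off the quadratic variation; you instead use the spectral decomposition $P^k=\Pi^*+\rho^k(I-\Pi^*)$ with $\rho=p_{cont}+p'_{cont}-1$ and sum the geometric autocovariances to get $\sigma^2=4\delta^2\pi^*(1-\pi^*)\frac{p_{cont}+p'_{cont}}{2-p_{cont}-p'_{cont}}$. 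These agree: writing $D=2-p_{cont}-p'_{cont}$, the paper's numerator $1-p'_{cont}+\pi^*(p'_{cont}-p_{cont})$ equals $2(1-p_{cont})(1-p'_{cont})/D$, so both expressions reduce to $4\delta^2(1-p_{cont})(1-p'_{cont})(p_{cont}+p'_{cont})/D^3$. Your route is more elementary for a two-state chain; the paper's martingale route is the one that generalizes (and is what \cite{VS} provides). One caution on the inversion step, which you rightly flag as the main obstacle: as written, "$\log N\sim\log n$ because $N$ grows linearly in $n$" is circular, since linear growth of $N$ is what you are proving. The non-circular argument is the paper's: let $\rho$ be the inverse of $t\log t$, note $\rho(u)\sim u/\log u$ and that $f\sim g$ implies $\rho(f)\sim\rho(g)$, and apply $\rho$ to $N\log N\sim tn\log n/\tau^*$. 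With that repair your argument closes, and your explicit remark that the dependence of $\tau_k$ on $X_{k-1}$ is harmless only because the time-change limit is deterministic is a point the paper leaves implicit.
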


\begin{rk}
\label{lob9}
If $p'_{cont}=p_{cont}=\pi^*=\frac{1}{2}$ as in \cite{CL}, we find $s^*=0$ and $\sigma= \delta$ as in \cite{CL}. If $p'_{cont}=p_{cont}=p$, we have $\pi^*=\frac{1}{2}$, $s^*=0$ and:
 \begin{align*}
&  \sigma^2=\delta^2 \frac{p}{1-p}.\\
\end{align*}
\end{rk}

\begin{proof}
Because $m(\pm \delta):=\EE[\tau_n | X_{n-1}=\pm \delta]=+\infty$ by Proposition \ref{lob2}, we cannot directly apply the well-known invariance principle results for semi-Markov processes. Denote for $t \in \RR_+$:
\begin{align*}
& R_n:=\sum_{k=1}^{n} (X_k-s^*),
& U_n(t):=n^{-1/2}\left[(1-\lambda _{n,t}) R_{\lfloor n t \rfloor} +\lambda _{n,t} R_{\lfloor n t \rfloor+1} \right],
\end{align*}
where $\lambda _{n,t}:=n t-\lfloor n t \rfloor$. We can show, following a martingale method similar to \cite{VS} (section 3), that we have the following weak convergence in the Skorokhod topology:
\begin{align*}
(U_n(t), t \geq 0) \stackrel{n \to \infty}{\Rightarrow} \sigma W,
\end{align*}
 where $W$ is a standard Brownian motion, and $\sigma$ is given by:
 \begin{align*}
& \sigma^2 = \sum_{i \in \{-\delta,\delta\}} \pi^*(i) v(i),
\end{align*}

where for $i \in \{-\delta,\delta\}$:
 \begin{align*}
& v(i):=b(i)^2+p(i)(g(-i)-g(i))^2-2b(i)p(i)(g(-i)-g(i)),\\
&b(i):=i-s^*,\\
& p(\delta):=1-p_{cont}, \hspace{3mm} p(-\delta):=1-p'_{cont},\\
\end{align*}

and (the vector) $g$ is given by:
 \begin{align*}
&g=(P+\Pi^*-I)^{-1}b,
\end{align*}
 where $\Pi^*$ is the matrix with rows equal to $(\pi^* \hspace{3mm}1-\pi^*)$. After completing the calculations we get:
 \begin{align*}
&  \sigma^2=4 \delta^2 \left( \frac{1-p'_{cont}+\pi^*(p'_{cont}-p_{cont})}{(p_{cont}+p'_{cont}-2)^2}-\pi^*(1-\pi^*)\right).
\end{align*}

For the sake of exhaustivity we also give the explicit expression for $g$: 
 \begin{align*}
&  g(\delta)=\delta\frac{p'_{cont}-p_{cont}+2(1-\pi^*)}{p_{cont}+p'_{cont}-2}-s^*,\\
& g(-\delta)=\delta\frac{p'_{cont}-p_{cont}-2\pi^*}{p_{cont}+p'_{cont}-2}-s^*.
\end{align*}

Indeed, to show the above convergence of $U_n$, we observe that we can write $R_n$ as the sum of a $\F_n-$martingale $M_n$ and a bounded process:
\begin{align*}
& R_n=M_n+\underbrace{g(X_n)-g(X_0)+X_n-X_0}_{unif. bounded},
& M_n:=\sum_{k=1}^n b(X_{k-1})-g(X_k)+g(X_{k-1}),
\end{align*}
where $\F_n := \sigma(\tau_k,X_k: k \leq n)$ and $X_0:=0$. The process $M_n$ is a martingale because $g$ is the unique solution of the following Poisson equation, since $\Pi^*b=0$:
\begin{align*}
& [P-I]g=b.
\end{align*}

The rest of the proof for the convergence of $U_n$ follows exactly \cite{VS} (section 3). \\

We proved earlier (lemma \ref{lob7}) that:
 \begin{align*}
& \frac{T_n}{n \log(n)} \Rightarrow \tau^*,\\
\end{align*}
where $T_n:=\sum_{k=1}^n \tau_k$. Since the Markov renewal process $(X_n,\tau_n)_{n \geq 0}$ is regular (because the state space is finite), we get $N_t \to \infty$ a.s. and therefore:
 \begin{align*}
& \frac{T_{N_t}}{N_t \log(N_t)} \Rightarrow \tau^*.\\
\end{align*}

Observing that $T_{N_t} \leq t \leq T_{N_t+1}$ a.s., we get:
\begin{align*}
& \frac{T_{N_t}}{N_t \log N_t} \leq \frac{t}{N_t \log N_t} \leq \frac{(N_t+1) \log (N_t+1)}{N_t \log N_t} \frac{T_{N_t+1}}{(N_t+1) \log (N_t+1)},
 \end{align*}\\

and therefore:
 \begin{align*}
& \frac{t}{N_t \log(N_t)} \Rightarrow \tau^*.\\
\end{align*}

Let $t_n:=tn\log(n)$. We would like to show as in \cite{CL}, equation $(17)$ that:
 \begin{align*}
& N_{t_n} \stackrel{P}{\sim} \frac{nt}{\tau^*}.\\
\end{align*}

We have denoted by $A_n \stackrel{P}{\sim} B_n$ iff $P-\lim \frac{A_n}{B_n}=1$. We denote as in \cite{CL} $\rho:(1,\infty) \to (1, \infty)$ to be the inverse function of $t\log(t)$, and we note that $\rho(t) \stackrel{t \to \infty}{\sim} \frac{t}{\log(t)}$. The first equivalence in \cite{CL}, equation $(17)$: $N_{t_n} \stackrel{P}{\sim} \rho \left( \frac{t_n}{\tau^*}\right)$ is not obvious. Indeed, we have $N_{t_n} \log(N_{t_n}) \stackrel{P}{\sim} \frac{t_n}{\tau^*}$, and we would like to conclude that $N_{t_n}=\rho(N_{t_n} \log(N_{t_n})) \stackrel{P}{\sim} \rho \left( \frac{t_n}{\tau^*}\right)$. The latter implication is not true for every function $\rho$, in particular if $\rho$ was exponential. Nevertheless, in our case, it is true because $\rho(t) \stackrel{t \to \infty}{\sim} \frac{t}{\log(t)}$, and therefore for any functions $f,g$ going to $+\infty$ as $t \to \infty$:
 \begin{align*}
& \frac{\rho(f(t))}{\rho(g(t))} \stackrel{t \to \infty}{\sim}  \frac{f(t)}{g(t)} \frac{\log(g(t))}{\log(f(t))}.
\end{align*}

Therefore we see that if $f(t) \stackrel{t \to \infty}{\sim} g(t)$, then by property of the logarithm $\log(f(t)) \stackrel{t \to \infty}{\sim} \log(g(t))$ and therefore $\rho(f(t)) \stackrel{t \to \infty}{\sim} \rho(g(t))$. This allows us to conclude as in \cite{CL} that:
 \begin{align*}
& \frac{N_{t_n}}{n} \stackrel{P}{\sim} \frac{t}{\tau^*}.\\
\end{align*}

Therefore, we can make a change of time as in \cite{VS}, Corollary 3.19 (see also \cite{B}, section 14), and denoting $\alpha_n(t):=\frac{N_{t_n}}{n}$, we obtain the following weak convergence in the Skorohod topology:
\begin{align*}
(U_n(\alpha_n(t)), t \geq 0) \Rightarrow (\sigma W_{\frac{t}{\tau^*}}, t \geq 0), 
\end{align*}

that is to say:
\begin{align*}
\left(\frac{s_{tn \log(n)}-N_{tn \log(n)}s^*}{\sqrt{n}}, t \geq 0 \right) \Rightarrow \frac{\sigma}{\sqrt{\tau^*}} W.
\end{align*}

The law of large numbers result comes from the fact that $\frac{N_{t_n}}{n} \stackrel{P}{\sim} \frac{t}{\tau^*}$, together with the following fact (strong law of large numbers for Markov chains):
\begin{align*}
& \frac{1}{n} \sum_{k=1}^n X_k \to s^* \hspace{3mm} a.e.
\end{align*}
\end{proof}

\subsection{Other cases: either $P^a(1,1) < P^a(-1,-1)$ or $P^b(1,1)<P^b(-1,-1)$} 

In this case, we know by Proposition \ref{lob2} that the conditional expectations $\EE[\tau_k| q^b_0=n_b, q^a_0=n_a ]$ are finite. Denoting the conditional expectations $m(\pm \delta):=\EE[\tau_k | X_{k-1}=\pm \delta]$, we have:
\begin{align*}
& m(\delta)=\sum_{p=1}^\infty \sum_{n=1}^\infty \EE[\tau_k| q^b_0=n, q^a_0=p ] f(n,p),\\
&m(-\delta)=\sum_{p=1}^\infty \sum_{n=1}^\infty \EE[\tau_k| q^b_0=n, q^a_0=p ] \tf (n,p).
\end{align*}

Throughout this section we will need the following assumption:\\

{\bf (A5)} Using the previous notations, the following holds:
\begin{align*}
& m(\pm \delta)<\infty. \\
\end{align*}

For example, the above assumption is satisfied if the support of the distributions $f$ and $\tf$ is compact, which is the case in practice. We obtain the following diffusion limit result as a classical consequence of invariance principle results for semi-Markov processes (see e.g. \cite{VS}, section 3):\\
\begin{prop}
\label{lob10}
Under assumption {\bf (A5)}, the renormalized price process $s_{nt}$ satisfies the following convergence in the Skorokhod topology:
\begin{align*}
& \left(\frac{s_{n t}}{n}, t \geq 0 \right) \stackrel{n \to \infty}{\to}\left(\frac{s^*t}{m_\tau}, t \geq 0 \right)  \hspace{3mm} \mbox{a.e.},\\
& \left(\frac{s_{n t}- N_{nt}s^*}{\sqrt{n}}, t \geq 0 \right) \stackrel{n \to \infty}{\Rightarrow} \frac{\sigma}{\sqrt{m_\tau}} W,\\
\end{align*}
where $W$ is a standard Brownian motion, $\sigma$ is given in Proposition \ref{lob8} and:
\begin{align*}
& m_\tau:=\sum_{i \in \{-\delta,\delta\}} \pi^*(i) m(i)=\pi^*m(\delta)+(1-\pi^*)m(-\delta).
\end{align*}
\end{prop}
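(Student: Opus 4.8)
The plan is to follow the same two-step structure as the proof of Proposition \ref{lob8}, but to exploit the fact that assumption \textbf{(A5)} makes the stationary mean duration $m_\tau$ finite. This is exactly what removes the logarithmic time-rescaling of the balanced case and lets me invoke the classical invariance principle for semi-Markov processes in its standard form, so that the result becomes, as announced, a direct application rather than a from-scratch derivation.

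First I would establish the functional central limit theorem for the centered partial sums of the increments. Setting $R_n:=\sum_{k=1}^n(X_k-s^*)$ and forming the interpolated process $U_n$ exactly as in Proposition \ref{lob8}, the martingale decomposition $R_n=M_n+(g(X_n)-g(X_0)+X_n-X_0)$ carries over verbatim, with $g$ solving the Poisson equation $[P-I]g=b$, $b(i):=i-s^*$, and the bracketed remainder uniformly bounded. Since $\{X_n\}$ is an irreducible two-state chain under \textbf{(A1)}, the martingale CLT gives $(U_n(t),t\geq 0)\Rightarrow\sigma W$ with exactly the same $\sigma$ as in Proposition \ref{lob8}; indeed $\sigma^2$ depends only on the chain $\{X_n\}$ and its stationary law, not on the durations, so no recomputation is needed.

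Next I would handle the passage from index time to physical time. Because $m_\tau<\infty$ by \textbf{(A5)}, the strong law of large numbers for the Markov renewal process yields $T_n/n\to m_\tau$ a.s., hence $N_t/t\to 1/m_\tau$ a.s. as $t\to\infty$, and regularity (the finite state space) guarantees $N_t\to\infty$ a.s. For the first (law-of-large-numbers) assertion I would write $s_{nt}/n=(N_{nt}/n)\,\bigl(N_{nt}^{-1}\sum_{k=1}^{N_{nt}}X_k\bigr)$ and combine $N_{nt}/n\to t/m_\tau$ a.s. with the strong law for Markov chains $\frac1n\sum_{k=1}^n X_k\to s^*$ a.e. to conclude $s_{nt}/n\to s^*t/m_\tau$ a.e.

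For the diffusion assertion I would compose $U_n$ with the random time change $\alpha_n(t):=N_{nt}/n$. Crucially, here $\alpha_n(t)\to t/m_\tau$ to a genuinely deterministic limit, so the delicate inversion of $t\log t$ that complicated Proposition \ref{lob8} is entirely absent; the continuous-mapping / change-of-time argument of \cite{VS} (Corollary 3.19) then applies directly and gives $(U_n(\alpha_n(t)),t\geq 0)\Rightarrow(\sigma W_{t/m_\tau},t\geq 0)=\tfrac{\sigma}{\sqrt{m_\tau}}W$. Since $n\alpha_n(t)=N_{nt}$ is an integer, one has $U_n(\alpha_n(t))=n^{-1/2}R_{N_{nt}}=n^{-1/2}(s_{nt}-N_{nt}s^*)$ exactly, which is precisely the stated limit. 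I expect the only genuine subtlety to be justifying this composition step --- the Anscombe-type statement that feeding the random index $N_{nt}$ into a functional CLT preserves weak convergence --- but under finite mean durations this is the textbook semi-Markov invariance principle and is markedly cleaner than the rescaling analysis of the balanced-order-flow case.
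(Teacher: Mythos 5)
Your proposal is correct and follows essentially the same route as the paper: the authors dispose of Proposition \ref{lob10} by citing the standard invariance principle for Markov renewal processes with finite mean sojourn times (\cite{VS}, section 3), explicitly noting that its proof uses the same martingale method as Proposition \ref{lob8}, which is precisely the decomposition, Poisson equation, and random time change you spell out. Your observation that the finiteness of $m_\tau$ under \textbf{(A5)} eliminates the $t\log t$ inversion and makes the time-change step routine is exactly the point the paper is implicitly relying on when it calls the result ``immediate.''
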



\begin{proof}
This is an immediate consequence of strong law of large numbers and invariance principle results for Markov renewal processes satisfying $m(\pm \delta)<\infty$ (see e.g. \cite{VS} section 3). In the previous article \cite{VS}, the proof of the invariance principle is carried on using a martingale method similar to the one of the proof of proposition \ref{lob8}.
\end{proof}

\section{Numerical Results}

In this section, we present calibration results which illustrate and justify our approach.\\

In \cite{CL}, it is assumed that the queue changes $V^b_k,V^a_k$ do not depend on their previous values $V^b_{k-1},V^a_{k-1}$. Empirically, it is found that $\PP[V^b_k=1] \approx \PP[V^b_k=-1]\approx 1/2$ (and similarly for the ask side). Here, we challenge this assumption by estimating and comparing the probabilities $P(-1,1)$ Vs. $P(1,1)$ on the one side and $P(-1,-1)$ Vs. $P(1,-1)$ on the other side to check whether or not they are approximately equal to each other, for both the ask and the bid. We also give - for both the bid and ask - the estimated probabilities $\PP[V_k=1]$, $\PP[V_k=-1]$ that we call respectively $P(1)$, $P(-1)$, to check whether or not they are approximately equal to 1/2 as in \cite{CL}.\\

The results below correspond to the 5 stocks Amazon, Apple, Google, Intel, Microsoft on June $21^{st}$ 2012 \footnote{The data was taken from the webpage \textit{https://lobster.wiwi.hu-berlin.de/info/DataSamples.php}}. The probabilities are estimated using the strong law of large numbers. We also give for indicative purposes the average time between order arrivals (in milliseconds (ms)) as well as the average number of stocks per order.\\

\begin{center}
\begin{tabular}{|c|c|c|c|c|c|c|c|c|c|c|}
\hline
\multirow{2}{1.5cm}{} &
\multicolumn{2}{c|}{Amazon}&
\multicolumn{2}{c|}{Apple}&
\multicolumn{2}{c|}{Google}&
\multicolumn{2}{c|}{Intel}&
\multicolumn{2}{c|}{Microsoft}\\
\cline{2-11}
& Bid & Ask &Bid&Ask&Bid&Ask&Bid&Ask&Bid&Ask\\\hline
Avg time btw. orders (ms) & 910 & 873 & 464 & 425 & 1123 & 1126 & 116 & 133 &130 &113\\
Avg nb. of stocks per order & 100 & 82 & 90 & 82 & 84 & 71 & 502 & 463 & 587 & 565 \\\hline
\end{tabular}\\
$\left. \right.$\\
\scriptsize
\textit{Average time between orders (ms) \& Average number of stocks per order. June $21^{st}$ 2012.}\\
\end{center}

\begin{center}
\begin{tabular}{|c|c|c|c|c|c|c|c|c|c|c|}
\hline
\multirow{2}{1.5cm}{} &
\multicolumn{2}{c|}{Amazon}&
\multicolumn{2}{c|}{Apple}&
\multicolumn{2}{c|}{Google}&
\multicolumn{2}{c|}{Intel}&
\multicolumn{2}{c|}{Microsoft}\\
\cline{2-11}
& Bid & Ask &Bid&Ask&Bid&Ask&Bid&Ask&Bid&Ask\\\hline
$P(1,1)$ & 0.48 & 0.57  & 0.50 & 0.55 & 0.48 & 0.53 & 0.55 & 0.61& 0.63 & 0.60 \\
$P(-1,1)$ & 0.46 & 0.42 & 0.40 & 0.42 & 0.46 & 0.49 & 0.44 & 0.40& 0.36& 0.41\\\hline
$P(-1,-1)$ & 0.54 & 0.58 & 0.60& 0.58 & 0.54 & 0.51& 0.56 & 0.60 & 0.64& 0.59\\
$P(1,-1)$ & 0.52 & 0.43 & 0.50& 0.45& 0.52& 0.47& 0.45 & 0.39& 0.37 & 0.40\\\hline
$P(1)$ & 0.47 & 0.497 & 0.44& 0.48 & 0.47& 0.51& 0.495 & 0.505& 0.49& 0.508\\
$P(-1)$ & 0.53 & 0.503 & 0.56& 0.52& 0.53& 0.49& 0.505 & 0.495& 0.51 & 0.492\\\hline
\end{tabular}\\
$\left. \right.$\\
\scriptsize
\textit{Estimated transition probabilities of the Markov Chains $V^b_k,V^a_k$. June $21^{st}$ 2012.}\\
\end{center}

\textbf{Findings:} First of all, we find as in \cite{CL} that for all stocks, $\PP[V_k=1] \approx \PP[V_k=-1]\approx 1/2$, except maybe in the case of Apple Bid. It is worth mentioning that we always have $P(1) < P(-1)$ except in 3 cases: Google Ask, Intel Ask and Microsoft Ask. Nevertheless, in these cases, $P(1)$ and $P(-1)$ are very close to each other and so they could be considered to fall into the case $P(1)=P(-1)$ of \cite{CL}. These 3 cases also correspond to the only 3 cases where $P(1,1)>P(-1,-1)$, which is contrary to our assumption $P(1,1) \leq P(-1,-1)$. Nevertheless, in these 3 cases, $P(1,1)$ and $P(-1,-1)$ are very close to each other so we can consider them to fall into the case $P(1,1)=P(-1,-1)$.\\

More importantly, we notice that the probabilities $P(-1,1)$, $P(1,1)$ can be  significantly different from each other - and similarly for the probabilities $P(-1,-1)$, $P(1,-1)$ - which justifies the use of a Markov Chain structure for the random variables $\{V^b_k\},\{V^a_k\}$. This phenomenon is particularly visible for example on Microsoft (Bid+Ask), Intel (Bid+Ask), Apple (Bid+Ask) or Amazon Ask. Further, regarding the comparison of $P(1,1)$ and $P(-1,-1)$, it turns out that they are often very smilar, except in the cases Amazon Bid, Apple Bid, Google Bid.\\

The second assumption of \cite{CL} that we would like to challenge is the assumed exponential distribution of the order arrival times $T^a_k,T_k^b$. To this end, on the same data set as used to estimate the transition probabilities $P^a(i,j)$, $P^b(i,j)$, we calibrate the empirical c.d.f.'s $H^a(i,j,\cdot)$, $H^b(i,j,\cdot)$ to the Gamma and Weibull distributions (which are generalizations of the exponential distribution). We recall that the p.d.f.'s of these distributions are given by:
\begin{align*}
& f_{Gamma}(x)=\frac{1}{\Gamma(k) \theta^k} x^{k-1} e^{-\frac{x}{\theta}}1_{x > 0},\\
& f_{Weibull}(x)= \frac{k}{\theta} \left( \frac{x}{\theta} \right)^{k-1} e^{-\left( \frac{x}{\theta} \right)^{k}}1_{x > 0}.
\end{align*}
Here, $k>0$ and $\theta>0$ represent respectively the shape and the scale parameter. The variable $k$ is dimensionless, whereas $\theta$ will be expressed in $ms^{-1}$. We perform a maximum likelihood estimation of the Weibull and Gamma parameters for each one of the empirical distributions $H^a(i,j, \cdot)$, $H^b(i,j,\cdot)$ (together with a 95 \% confidence interval for the parameters). As we can see on the tables below, the shape parameter $k$ is always significantly different than 1 ($\sim 0.1$ to $0.3$), which indicates that the exponential distribution is not rich enough to fit our observations. To illustrate this, we present below the empirical c.d.f. of $H(1,-1)$ in the case of Google Bid, and we see that Gamma and Weibull allow to fit the empirical c.d.f. in a much better way than Exponential.

\scriptsize
\begin{center}
\includegraphics[scale=0.5]{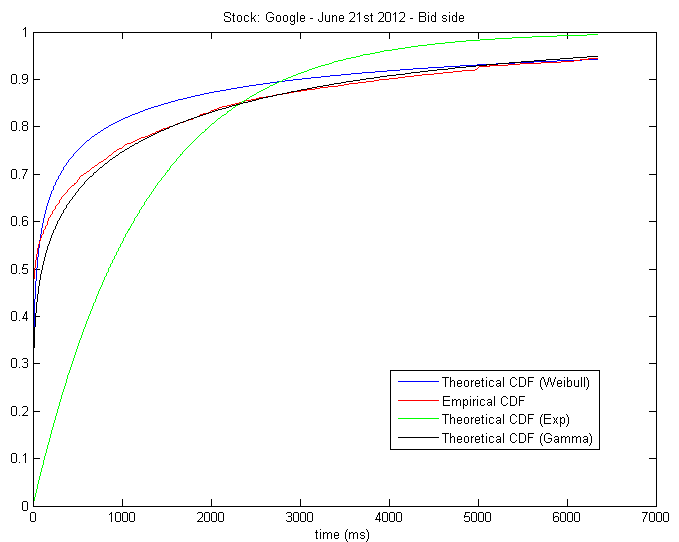}\\
$\left. \right.$\\
\textit{H(1,-1) - Google Bid - June $21^{st}$ 2012.}\\
\end{center}
\normalsize
\vspace{5mm}

We summarize our calibration results in the tables below.  \\

\begin{center}
\begin{tabular}{|c|c|c|c|c|}
\hline
\centering \textbf{Amazon Bid} 
& \small$H(1,1)$ & \small$H(1,-1)$ &\small $H(-1,-1)$& \small$H(-1,1)$ \\\hline
Weibull $\theta$ & 99.1  & 185.5  & 87.7 & 87.0  \\
& \scriptsize (90.2-109.0)  & \scriptsize (171.3-200.8)  & \scriptsize (80.1-96.0) & \scriptsize (78.7-96.1) \\
Weibull $k$ & 0.279  & 0.323  & 0.285 & 0.258 \\
& \scriptsize (0.274-0.285)  & \scriptsize (0.317-0.329)  & \scriptsize (0.280-0.290) & \scriptsize (0.253-0.263)   \\\hline\hline
Gamma $\theta$ & 4927 & 4321  & 4712 & 5965  \\
& \scriptsize (4618-5257)  & \scriptsize (4075-4582)  & \scriptsize (4423-5019) & \scriptsize (5589-6366)  \\
Gamma $k$ & 0.179 & 0.215  & 0.179 & 0.165  \\
& \scriptsize (0.174-0.184)  & \scriptsize (0.209-0.220)  & \scriptsize (0.175-0.184) & \scriptsize (0.161-0.169)  \\\hline
\end{tabular}\\
$\left. \right.$\\
\scriptsize
\textit{Amazon Bid: Fitted Weibull and Gamma parameters. 95 \% confidence intervals in brackets. June $21^{st}$ 2012.}\\
\end{center}

\begin{center}
\begin{tabular}{|c|c|c|c|c|}
\hline
\centering \textbf{Amazon Ask} 
& \small$H(1,1)$ & \small$H(1,-1)$ &\small $H(-1,-1)$& \small$H(-1,1)$ \\\hline
Weibull $\theta$ & 80.8 & 197.8 & 57.9 & 137.0  \\
& \scriptsize (74.4-87.7) & \scriptsize (181.9-215.1) & \scriptsize (52.8-63.4) & \scriptsize (124.2-151.2) \\
Weibull $k$ & 0.274 & 0.324 & 0.279 & 0.276 \\
& \scriptsize (0.269-0.278) & \scriptsize (0.317-0.330) & \scriptsize (0.274-0.285) & \scriptsize (0.270-0.281)  \\\hline\hline
Gamma $\theta$ & 4732 & 4623 & 3845 & 5879  \\
& \scriptsize (4475-5004) & \scriptsize (4345-4919) & \scriptsize (3609-4095) & \scriptsize (5502-6283)  \\
Gamma $k$ & 0.174 & 0.215 & 0.173 & 0.181  \\
&  \scriptsize (0.170-0.178) & \scriptsize (0.209-0.221) & \scriptsize (0.168-0.177) & \scriptsize (0.176-0.186)  \\\hline
\end{tabular}\\
$\left. \right.$\\
\scriptsize
\textit{Amazon Ask: Fitted Weibull and Gamma parameters. 95 \% confidence intervals in brackets. June $21^{st}$ 2012.}\\
\end{center}

\begin{center}
\begin{tabular}{|c|c|c|c|c|}
\hline
\centering \textbf{Apple Bid} 
& \small$H(1,1)$ & \small$H(1,-1)$ &\small $H(-1,-1)$& \small$H(-1,1)$ \\\hline
Weibull $\theta$ & 75.9  & 180.9  & 31.5 & 78.2  \\
& \scriptsize (71.6-80.5)  & \scriptsize (172.6-189.7)  & \scriptsize (29.5-33.6) & \scriptsize (73.4-83.3) \\
Weibull $k$ & 0.317  & 0.400  & 0.271 & 0.300 \\
& \scriptsize (0.313-0.321)  & \scriptsize (0.394-0.405)  & \scriptsize (0.267-0.274) & \scriptsize (0.296-0.304)   \\\hline\hline
Gamma $\theta$ & 2187 & 1860  & 2254 & 2711  \\
& \scriptsize (2094-2284)  & \scriptsize (1787-1935)  & \scriptsize (2157-2355) & \scriptsize (2592-2835)  \\
Gamma $k$ & 0.206 & 0.276  & 0.168 & 0.196  \\
& \scriptsize (0.202-0.210)  & \scriptsize (0.271-0.282)  & \scriptsize (0.165-0.171) & \scriptsize (0.192-0.199)  \\\hline
\end{tabular}\\
$\left. \right.$\\
\scriptsize
\textit{Apple Bid: Fitted Weibull and Gamma parameters. 95 \% confidence intervals in brackets. June $21^{st}$ 2012.}\\
\end{center}

\begin{center}
\begin{tabular}{|c|c|c|c|c|}
\hline
\centering \textbf{Apple Ask} 
& \small$H(1,1)$ & \small$H(1,-1)$ &\small $H(-1,-1)$& \small$H(-1,1)$ \\\hline
Weibull $\theta$ & 46.6 & 152.5 & 27.7 & 95.5  \\
& \scriptsize (44.1-49.2) & \scriptsize (145.5-159.8) & \scriptsize (26.0-29.6) & \scriptsize (90.0-101.5) \\
Weibull $k$ & 0.298 & 0.394 & 0.271 & 0.308 \\
& \scriptsize (0.294-0.301) & \scriptsize (0.388-0.399) & \scriptsize (0.267-0.275) & \scriptsize (0.303-0.312) \\\hline\hline
Gamma $\theta$ & 2019 & 1666 & 1995 & 2740  \\
& \scriptsize (1942-2099) & \scriptsize (1603-1732) & \scriptsize (1907-2087) & \scriptsize (2624-2861)  \\
Gamma $k$ & 0.189 & 0.271 & 0.168 & 0.204  \\
&  \scriptsize (0.186-0.192) & \scriptsize (0.266-0.277) & \scriptsize (0.165-0.171) & \scriptsize (0.200-0.208)  \\\hline
\end{tabular}\\
$\left. \right.$\\
\scriptsize
\textit{Apple Ask: Fitted Weibull and Gamma parameters. 95 \% confidence intervals in brackets. June $21^{st}$ 2012.}\\
\end{center}

\begin{center}
\begin{tabular}{|c|c|c|c|c|}
\hline
\centering \textbf{Google Bid} 
& \small$H(1,1)$ & \small$H(1,-1)$ &\small $H(-1,-1)$& \small$H(-1,1)$ \\\hline
Weibull $\theta$ & 113.9  & 158.5  & 67.9 & 56.8  \\
& \scriptsize (102.8-126.2)  & \scriptsize (143.4-175.3)  & \scriptsize (60.6-76.0) & \scriptsize (50.5-63.8) \\
Weibull $k$ & 0.276  & 0.284  & 0.261 & 0.246 \\
& \scriptsize (0.270-0.282)  & \scriptsize (0.278-0.290)  & \scriptsize (0.255-0.266) & \scriptsize (0.241-0.251)  \\\hline\hline
Gamma $\theta$ & 6720 & 6647  & 6381 & 7025   \\
& \scriptsize (6263-7210)  & \scriptsize (6204-7122)  & \scriptsize (5913-6886) & \scriptsize (6517-7571)  \\
Gamma $k$ & 0.174 & 0.185  & 0.160 & 0.151  \\
& \scriptsize (0.169-0.179)  & \scriptsize (0.180-0.191)  & \scriptsize (0.155-0.165) & \scriptsize (0.147-0.156)  \\\hline
\end{tabular}\\
$\left. \right.$\\
\scriptsize
\textit{Google Bid: Fitted Weibull and Gamma parameters. 95 \% confidence intervals in brackets. June $21^{st}$ 2012.}\\
\end{center}

\begin{center}
\begin{tabular}{|c|c|c|c|c|}
\hline
\centering \textbf{Google Ask} 
& \small$H(1,1)$ & \small$H(1,-1)$ &\small $H(-1,-1)$& \small$H(-1,1)$ \\\hline
Weibull $\theta$ & 196.7 & 271.6 & 38.1 & 57.0  \\
& \scriptsize (180.6-214.2) & \scriptsize (248.5-296.8) & \scriptsize (33.8-43.0) & \scriptsize (51.3-63.3) \\
Weibull $k$ & 0.290 & 0.310 & 0.258 & 0.263 \\
& \scriptsize (0.285-0.295) & \scriptsize (0.303-0.316) & \scriptsize (0.253-0.264) & \scriptsize (0.258-0.268)  \\\hline\hline
Gamma $\theta$ & 6081 & 6571 & 4304 & 4698  \\
& \scriptsize (5734-6450) & \scriptsize (6165-7003) & \scriptsize (3971-4664) & \scriptsize (4380-5040)  \\
Gamma $k$ & 0.195 & 0.209 & 0.156 & 0.164  \\
&  \scriptsize (0.190-0.200) & \scriptsize (0.203-0.215) & \scriptsize (0.151-0.161) & \scriptsize (0.159-0.168) \\\hline
\end{tabular}\\
$\left. \right.$\\
\scriptsize
\textit{Google Ask: Fitted Weibull and Gamma parameters. 95 \% confidence intervals in brackets. June $21^{st}$ 2012.}\\
\end{center}

\begin{center}
\begin{tabular}{|c|c|c|c|c|}
\hline
\centering \textbf{Intel Bid} 
& \small$H(1,1)$ & \small$H(1,-1)$ &\small $H(-1,-1)$& \small$H(-1,1)$ \\\hline
Weibull $\theta$ & 2.76  & 2.56  & 3.33 & 2.01  \\
& \scriptsize (2.66-2.86)  & \scriptsize (2.45-2.67)  & \scriptsize (3.21-3.45) & \scriptsize (1.92-2.10) \\
Weibull $k$ & 0.227  & 0.226  & 0.267 & 0.209 \\
& \scriptsize (0.226-0.229)  & \scriptsize (0.225-0.228)  & \scriptsize (0.265-0.269) & \scriptsize (0.208-0.211)  \\\hline\hline
Gamma $\theta$ & 1016 & 912  & 543 & 1093   \\
& \scriptsize (991-1040)  & \scriptsize (888-937)  & \scriptsize (530-557) & \scriptsize (1063-1124)  \\
Gamma $k$ & 0.129 & 0.130  & 0.151 & 0.120  \\
& \scriptsize (0.128-0.130)  & \scriptsize (0.129-0.131)  & \scriptsize (0.150-0.152) & \scriptsize (0.119-0.121)  \\\hline
\end{tabular}\\
$\left. \right.$\\
\scriptsize
\textit{Intel Bid: Fitted Weibull and Gamma parameters. 95 \% confidence intervals in brackets. June $21^{st}$ 2012.}\\
\end{center}

\begin{center}
\begin{tabular}{|c|c|c|c|c|}
\hline
\centering \textbf{Intel Ask} 
& \small$H(1,1)$ & \small$H(1,-1)$ &\small $H(-1,-1)$& \small$H(-1,1)$ \\\hline
Weibull $\theta$ & 1.33  & 5.46  & 4.63 & 5.15  \\
& \scriptsize (1.28-1.38)  & \scriptsize (5.21-5.73)  & \scriptsize (4.45-4.80) & \scriptsize (4.90-5.41) \\
Weibull $k$ & 0.235  & 0.231  & 0.256 & 0.225 \\
& \scriptsize (0.234-0.237)  & \scriptsize (0.230-0.233)  & \scriptsize (0.254-0.257) & \scriptsize (0.224-0.227)  \\\hline\hline
Gamma $\theta$ & 705 & 1219  & 884 & 1305   \\
& \scriptsize (688-723)  & \scriptsize (1183-1256)  & \scriptsize (862-907) & \scriptsize (1266-1345)  \\
Gamma $k$ & 0.126 & 0.137  & 0.146 & 0.133  \\
& \scriptsize (0.125-0.127)  & \scriptsize (0.136-0.139)  & \scriptsize (0.144-0.147) & \scriptsize (0.132-0.135)  \\\hline
\end{tabular}\\
$\left. \right.$\\
\scriptsize
\textit{Intel Ask: Fitted Weibull and Gamma parameters. 95 \% confidence intervals in brackets. June $21^{st}$ 2012.}\\
\end{center}

\begin{center}
\begin{tabular}{|c|c|c|c|c|}
\hline
\centering \textbf{Microsoft Bid} 
& \small$H(1,1)$ & \small$H(1,-1)$ &\small $H(-1,-1)$& \small$H(-1,1)$ \\\hline
Weibull $\theta$ & 0.79  & 2.98  & 2.68 & 2.64  \\
& \scriptsize (0.76-0.82)  & \scriptsize (2.83-3.13)  & \scriptsize (2.59-2.78) & \scriptsize (2.50-2.78) \\
Weibull $k$ & 0.215  & 0.221  & 0.259 & 0.211 \\
& \scriptsize (0.214-0.217)  & \scriptsize (0.219-0.223)  & \scriptsize (0.257-0.260) & \scriptsize (0.209-0.213)  \\\hline\hline
Gamma $\theta$ & 1012 & 1315  & 664 & 1488   \\
& \scriptsize (987-1039)  & \scriptsize (1274-1358)  & \scriptsize (648-681) & \scriptsize (1440-1537)  \\
Gamma $k$ & 0.112 & 0.125  & 0.142 & 0.120  \\
& \scriptsize (0.111-0.113)  & \scriptsize (0.124-0.127)  & \scriptsize (0.141-0.143) & \scriptsize (0.118-0.121)  \\\hline
\end{tabular}\\
$\left. \right.$\\
\scriptsize
\textit{Microsoft Bid: Fitted Weibull and Gamma parameters. 95 \% confidence intervals in brackets. June $21^{st}$ 2012.}\\
\end{center}

\begin{center}
\begin{tabular}{|c|c|c|c|c|}
\hline
\centering \textbf{Microsoft Ask} 
& \small$H(1,1)$ & \small$H(1,-1)$ &\small $H(-1,-1)$& \small$H(-1,1)$ \\\hline
Weibull $\theta$ & 0.85  & 1.57  & 2.07 & 1.43  \\
& \scriptsize (0.82-0.89)  & \scriptsize (1.50-1.64)  & \scriptsize (2.00-2.15) & \scriptsize (1.36-1.50) \\
Weibull $k$ & 0.218  & 0.223  & 0.259 & 0.210 \\
& \scriptsize (0.217-0.219)  & \scriptsize (0.222-0.225)  & \scriptsize (0.258-0.261) & \scriptsize (0.208-0.211)  \\\hline\hline
Gamma $\theta$ & 1004 & 1081  & 574 & 1138   \\
& \scriptsize (980-1028)  & \scriptsize (1051-1112)  & \scriptsize (560-588) & \scriptsize (1105-1171)  \\
Gamma $k$ & 0.113 & 0.121  & 0.140 & 0.116  \\
& \scriptsize (0.112-0.114)  & \scriptsize (0.120-0.122)  & \scriptsize (0.139-0.141) & \scriptsize (0.115-0.117)  \\\hline
\end{tabular}\\
$\left. \right.$\\
\scriptsize
\textit{Microsoft Ask: Fitted Weibull and Gamma parameters. 95 \% confidence intervals in brackets. June $21^{st}$ 2012.}\\
\end{center}

\section{Conclusion and Future Work}

In this paper, we introduced a semi-Markovian modeling of limit order books in order to match empirical observations. We extended the model of \cite{CL} in the following ways:
\begin{enumerate}[i)]
\item inter-arrival times between book events (limit orders, market orders, order cancellations) are allowed to have an arbitrary distribution.
\item the arrival of a new book event at the bid or the ask and its corresponding inter-arrival time are allowed to depend on the nature of the previous event.
\end{enumerate}

In order to do so, both the bid and ask queues are driven by Markov renewal processes. It results from these chosen dynamics that the price process can be expressed as a functional of another Markov renewal process, which we characterized explicitly. In this context, we obtained probabilistic results such as the duration until the next price change, the probability of price increase and the characterization of the Markov renewal process driving the stock price process (section 3). In section 4, we obtained diffusion limit results for the stock price process generalizing those of \cite{CL}. Finally, we presented in section 5 calibration results on real market data in order to illustrate and justify our approach.


\end{document}